\newcolumntype{L}{>{\centering\arraybackslash}m{3cm}}
\newcolumntype{R}[1]{>{\raggedleft\arraybackslash}p{#1}}
\DeclareMathOperator*{\argmax}{arg\,max}
\newtheorem{proposition}{Proposition}
\newtheorem{theorem}{Theorem}
\newtheorem{lemma}{Lemma}
\newtheorem{remark}{Remark}
\newtheorem{definition}{Definition}
\newtheorem{assumption}{Assumption}
\newtheorem{example}{Example}
\newtheorem{corollary}{Corollary}
\begin{document}
\title{Nonparametric Analysis of Random Utility Models Robust to Nontransitive Preferences
 \author{Wilfried Youmbi
\thanks{Department of Economics, the University of Western Ontario, E-mail: \href{mailto:wyoumbif@uwo.ca}{wyoumbif@uwo.ca}.  I express my deepest gratitude to Victor Aguiar, Nail Kashaev, Gregory Pavlov, and Roy Allen for their continuous guidance, support, and insightful suggestions while I prepared this paper. I also want to thank Sergio Ocampo, Baxter Robinson, Nirav Mehta, Rory McGee, Salvador Navarro, Michael Sullivan, Lisa Tarquinio, Charles Mao Takongmo, and all participants of the applied/metrics lunch talk and the AMM reading group for helpful comments. All mistakes are mine.
}}

}
\maketitle
\begin{abstract}
The Random Utility Model (RUM) is the gold standard in describing the behavior of a population of consumers. The RUM operates under the assumption of transitivity in consumers' preference relationships,  but the empirical literature has regularly documented its violation. In this paper, I introduce the Random Preference Model (RPM), a novel framework for understanding the choice behavior in a population akin to RUMs, which preserves monotonicity and accommodates nontransitive behaviors. The primary objective is to test the null hypothesis that a population of rational consumers generates cross-sectional demand distributions without imposing constraints on the unobserved heterogeneity or the number of goods. I analyze data from the UK Family Expenditure Survey and find evidence that contradicts RUMs and supports RPMs. These findings underscore RPMs' flexibility and capacity to explain a wider spectrum of consumer behaviors compared to RUMs. This paper generalizes the stochastic revealed preference methodology of \cite{mcfadden1990stochastic} for finite choice sets to settings with nontransitive and possibly nonconvex preference relations.\vspace{0.15cm}

\noindent\textbf{Keywords:} Preference functions, random preference models, stochastic revealed preference\\
\noindent\textbf{JEL Classification:} C14, C60, D11, D12,
\vspace{0in}\\
\end{abstract}
\thispagestyle{empty}
\date{}
\section{Introduction}
The random utility model (RUM) stands as the cornerstone for understanding the decisions of a population \cite[][]{aguiar2023random, mcfadden1990stochastic, deb2023revealed}.\footnote{Early applications of RUMs have focused on transportation decisions \cite[][]{domencich1975urban, ben1985discrete}, and later studies have used this modeling framework to analyze problems in industrial organizations \cite[][]{berry1993automobile, pakes1984patents, tebaldi2023nonparametric} such as demand and welfare analysis \cite[][]{deb2023revealed, frick2019dynamic, aguiar2021stochastic, bhattacharya2019demand, tebaldi2023nonparametric}.} Rooted in the assumption that consumers maximize their utility, which is considered random, the RUM posits complete and transitive preference relationships.\vspace{0.20cm}

Transitivity is a fundamental property of rational decision-making.\footnote{Transitivity as an assumption is based on the belief that if an individual prefers an apple to a banana and a banana to an orange, this individual should prefer an apple to an orange.} Despite its central role, empirical studies have repeatedly shown its violation \cite[][]{tversky1969intransitivity, quah2006weak, loomes1991observing, cherchye2018transitivity, alos2022identifying}. For example, \cite{cherchye2018transitivity} find that more than $65 \%$ of consumers in the Spanish Continuous Family Expenditure Survey did not meet the transitivity criterion. Moreover, the transitivity assumption may fail when individuals compare pairs of alternatives and choose their preferred option \cite[][]{david1963method, tversky1969intransitivity}.\footnote{One can imagine a model with a menu of at least three goods in which individuals compare pairs of alternative options and select their preferred option. From these pairwise preferences overall preferences can be derived, which may not be transitive.}\vspace{0.20cm}

Transitivity violations can be due to various factors, such as inattentiveness, perceptual difficulties \cite[][]{luce1956semiorders}, decision-making based on a similarity or regret \cite[][]{rubinstien1988similarity, loomes1982regret}, or to temporal inconsistencies due to relative time discounting \cite[][]{roelofsma2000intransitive, ok2007theory}.\
The violation of the transitivity assumption translates into the inability of the RUM to explain the behavior of consumers, especially when the dimension of the commodity space is at least three \cite[][]{deb2023revealed, aguiar2023random, kitamura2018nonparametric}. Consequently, recovering the distribution of heterogeneous preferences simply by observing the probability of choosing a finite set of alternatives from different menus becomes impossible \cite[][]{deb2023revealed}. \vspace{0.20cm}

The main contribution of this paper is the introduction, characterization, and implementation of the Random Preference Model (RPM), that is a novel application of rationality in choice behavior analogous to RUMs. The RPM preserves monotonicity and helps in the identification of preferences when transitivity is violated. By comparing RPMs with RUMs, I find evidence contradicting RUMs and supporting RPMs. These empirical results highlight the flexibility of RPMs and their ability to explain a broader range of consumer behaviors beyond the limitations of RUMs. Furthermore, I show that meaningful welfare and counterfactual analysis are possible using a finite, repeated cross-sectional dataset on observed prices and the distribution of choices consistent with RPMs but not RUMs. As far as I know, this paper is the first to develop and implement a stochastic theory of revealed preference that is robust to transitivity violations. \vspace{0.20cm}

As \cite{nishimura2016utility} and \cite{diecidue2017regret} have pointed out, addressing non-transitivity requires reexamining the fundamental ideas about maximization, indifference, and utility. To effectively model nontransitive preferences, I adopt pairwise comparison models that work with binary menus \cite[][]{aguiar2020rationalization, shafer1974nontransitive, john2001concave, kihlstrom1976demand}. I use preference functions to represent nontransitive binary relations among choices.\footnote{A preference function, $r: X \times X \to \mathbb{R}$, maps ordered pairs of bundles of goods to real numbers. Unlike traditional utility functions that may struggle to capture nontransitive preferences, preference functions serve as numerical representations of consumer preferences and offer a more flexible framework (\citealp{aguiar2020rationalization}; \citealp{nishimura2016utility}).} These functions gauge the ``strength of the preference," where bundle $x$ is preferred to $y$ if and only if the preference function yields a positive value ($r(x, y) \geq 0$). This representation enables nontransitive decision-making, as $r(x, y) > 0$ and $r(y, z) > 0$ does not necessarily mean $r(x, z) > 0$.\vspace{0.20cm}

The RPM assumes that individuals' preferences remain stable over time, and a stochastic demand system is rationalizable if and only if it is a mixture of rationalizable nonstochastic demand systems. Therefore, I define the rationalization of these nonstochastic demand systems with the weak axiom of revealed preference (WARP). In contrast to the strong axiom of revealed preference (SARP) used in \cite{kitamura2018nonparametric} (henceforth KS) and \cite{deb2023revealed} (henceforth DKSQ), the WARP provides a more minimal, normatively appealing, and empirically robust consistency condition for a choice \cite[][]{aguiar2020rationalization, blundell2008best, cosaert2018nonparametric, quah2006weak}. In addition, the WARP can accommodate consumers with non-convex preferences, but this accommodation can lead to indecision that can manifest itself in the form of empty-valued demand correspondences \cite[][]{aguiar2020rationalization} (Henceforth AHS).\footnote{Convexity means the preference for mixed consumption bundles over their components. AHS characterizes the necessary and sufficient property that guarantees that the associated consumer preference maximization problem always has a solution.} However, it is important to note that the framework for analyzing the rationality of nonstochastic demand systems does not account for unobserved individual heterogeneity that can stem from preference variations. \vspace{0.20cm}

To characterize RPMs, I perform a WARP-based analysis in an environment with unobserved preference heterogeneity.\footnote{Unobserved preference heterogeneity means that different consumers behave differently even when all observable household characteristics are taken into account (\citealp{cherchye2019bounding}).} In my framework, I use an alternative approach similar to the random parameter model proposed by \cite{apesteguia2018monotone} that I have adapted to address nontransitive preferences. I avoid adopting the standard household-specific additive error structure like in \cite{alos2022identifying} and \cite{fosgerau2023nontransitive} which often leads to counterintuitive outcomes due to a logit shock. The additive shock is tractable but can cause problems such as the violation of monotonicity. In contrast, the random parameter model preserves the functional form and guarantees monotonicity by its own construction. In this model, a single random error in the parameter of the preference function transforms it into a different preference function within the same original space of preference functions, and the evaluation of the alternatives remains consistent. The RPM, therefore, serves as a counterpart to the single-crossing RUM while being resilient to deviations from the transitivity of preferences.\vspace{0.20cm}

Following DKSQ, I consider a scenario in which an analyst has a repeated cross-sectional data set in which each observation comprises the prevailing prices and the distribution of demand in a population at those prices. The central question is whether it is possible to find a necessary and sufficient condition that ensures the rationalizability of the data set with a population of consumers who maximize preference functions instead of utility functions. To answer this question, I provide an exact analog of the empirical characterization of KS for RUMs. This RPM characterization is independent of a parametric specification of the underlying household preference functions or the unobserved heterogeneity distribution. Furthermore, I show that RPMs, like RUMs, can also be statistically tested using the tools in KS.\vspace{0.20cm}

A direct application of the RPM consists of rationalization tests to partially identify the distribution over preference functions when observable choices are repeated in cross-sections, are unrestricted, and unobserved heterogeneity is allowed. This is a crucial step in a welfare analysis. In particular, my methodology allows for insightful studies of the effects of the substitution or addition of a product on the consumers' welfare.\footnote{Consider a government that wants to introduce a subsidy program to encourage the adoption of renewable energy sources, such as solar panels, among homeowners. The aim is to reduce carbon emissions and reliance on non-renewable energy sources. I can measure the proportion of homeowners benefitting from this policy in my welfare analysis.} Welfare analysis in the presence of general preference heterogeneity is a challenging empirical topic that has been the subject of several recent research papers 
 \cite[][]{hausman2016individual, deb2023revealed}. \vspace{0.20cm}

This paper contributes to the literature on the nonparametric analysis of RUMs \cite[][]{mcfadden1971extension, mcfadden2006revealed, mcfadden1990stochastic, falmagne1978representation, stoye2019revealed, koida2024dual, kitamura2018nonparametric, im2022non, manzini2007sequentially, kawaguchi2017testing} and their exceptional cases such as single-crossing RUMs \cite[][]{apesteguia2018monotone}, random augmented utility models (RAUMs) \cite[][]{deb2023revealed}, random quasi-linear utility models \cite[][]{yang2023random}, and dynamic RUMs \cite[][]{kashaev2023dynamic, frick2019dynamic}. My framework for testing RPMs is closer to the frameworks of \cite{kitamura2018nonparametric} and \cite{deb2023revealed}. \cite{kitamura2018nonparametric} have developed a statistical test to verify whether a population of heterogeneous households obeys the axiom of stochastic revealed preference (ARSP) for a finite collection of budget sets, thereby explicitly considering the preference relations' transitivity. \cite{deb2023revealed} have developed a random version of the expenditure-augmented utility models and have shown that this class of RUMs is amenable to statistical testing.\footnote{An expenditure augmented utility function is a function $u:\mathbb{R}^{L}_{+} \times \mathbb{R}_{-} \to \mathbb{R}$, where $u(y, -e)$ is the consumer's utility when they spend $e$ to buy the bundle of goods $y$.} \cite{kashaev2023dynamic} have developed and characterized a dynamic version of RUMs. In this paper, I take a different approach by focusing on a static model that imposes fewer constraints on consumer behavior. My work builds upon the stochastic revealed preference methodology proposed by \cite{mcfadden1971extension, mcfadden1990stochastic} (henceforth MR), KS, and DKQS but generalizes it to a setting where preference relationships are nontransitive.\vspace{0.20cm}

 The paper also contributes to the extensive literature on the (partial) identification of preferences \cite[][]{deb2023revealed, hausman2016individual, kashaev2023dynamic, apesteguia2018monotone, manski2003partial, lu2019bayesian, turansick2022identification, dardanoni2020inferring, wei2024random}, and to the literature on counterfactual stochastic choices \cite[][]{cherchye2019bounding, kitamura2019nonparametric, deb2023revealed, kashaev2021random, kashaev2023dynamic}. My welfare results allow for a more comprehensive understanding of the impact of policy changes on consumers' welfare because I can partially identify nontransitive preferences without restricting the dimension of the commodity space to two. More importantly, my welfare and counterfactual bounds are sharp and can only be improved with further information. Therefore, the SARP-based approach to stochastic revealed-preference should significantly improve the bounds, but \cite{cosaert2018nonparametric} and \cite{blundell2015sharp} found that using the SARP instead of the WARP does not substantially improve the revealed preferences' bounds. Specifically, they show that bounds under the SARP improve only marginally. \vspace{0.20cm} 
 
The final strand of literature my paper contributes to is the long-standing work on nontransitive preferences \cite[][]{fishburn1982nontransitive, fishburn1991nontransitive, fountain1981consumer, alos2022identifying, fosgerau2023nontransitive, kalai2002rationalizing, tversky1969intransitivity, hoderlein2014revealed, aguiar2020rationalization}. My work extends the revealed preference theory of AHS by explicitly incorporating individual heterogeneity. In particular, my framework is consistent with some recent work investigating stochastic choices under nontransitive preferences \cite[][]{alos2022identifying, fosgerau2023nontransitive, aguiar2020rationalization}. Along this line of thought, \cite{alos2022identifying} have developed a method to distinguish between nontransitive preferences and transitivity violations due to noise. Their nonparametric approach uses response times and choice frequencies to distinguish the revealed preferences from random fluctuations. \cite{fosgerau2023nontransitive} have found that nontransitive behavior in combination with a monotonicity assumption corresponds to a transitive stochastic choice model. \vspace{0.20cm}

The rest of the paper is structured as follows: In Section $2$, I introduce the nonstochastic choice setup and define the concept of preference function. Section $3$ deals with the development of a stochastic revealed preference methodology. In Section $4$, I describe the procedure for deciding whether a given repeated cross-sectional data set can be rationalized by an RPM. Section $5$ presents the empirical application. In Section $6$, I discuss the implications of the empirical content of the RPM for welfare comparisons and counterfactual analysis. Section $7$ is the conclusion of the paper. 
 
\section{Nonstochastic Choice Setup}
Suppose a consumer chooses bundles consisting of $L$ goods ($L \geq 2$). I use the following notation: $\mathbb{R}^{L}_{+}=\left\lbrace y \in \mathbb{R}^{L}: y \geqq (0, \ldots, 0) \right\rbrace$,~ $\mathbb{R}^{L}_{++}=\left\lbrace y \in \mathbb{R}^{L}: y > (0, \ldots, 0) \right\rbrace$.\footnote{Given $x=(x^{1},\ldots, x^{l},\ldots x^{L})$ and $y=(y^{1},\ldots, y^{l},\ldots y^{L})$; I adopt the following definitions: $x \geqq y$ means $x^{l} \geq y^{l}$ for all $l \in \left\lbrace 1, \ldots L\right\rbrace $; $x \geq y$ means $x \geqq y$ and $x \neq y$; and $x> y$ means $x^{l} > y^{l}$ for all $l \in \left\lbrace 1, \ldots L\right\rbrace $.} $X\equiv \mathbb{R}^{L}_{+}$ denotes a metric space that is interpreted as a universal set of alternatives. Suppose I had access to a finite number of observations, denoted by $T$, on prices and the chosen quantities of these goods, where the observations are indexed by $\mathbb{T}=\left\lbrace 1, \ldots, T\right\rbrace$. In each period $t \in \mathbb{T}$, consumers choose a single bundle $y_{t}=(y^{1}_{t}, \ldots, y^{L}_{t})^{'} \in X$ when facing a price vector $p_{t}=(p^{1}_{t}, \ldots, p^{L}_{t}) \in \mathbb{R}^{L}_{++}$. Wealth in the period $t \in \mathbb{T} $ is equivalent to $w_{t} \in \mathbb{R}_{++}$. $O^{T}=\left\lbrace p_{t}, y_{t}\right\rbrace_{t \in \mathbb{T}}$ that denotes all price-quantity observations. I refer to $O^{T}$ as the data which describe a single consumer's demand observed over time. \vspace{0.25cm} 
 \begin{definition} [Direct revealed preferred relations]
 $y_{t}$ is directly revealed as being preferred to $y_{s}$. The preference is denoted by $y_{t} \succeq^{R, D} y_{s}$ if $p_{t}y_{t} \geq p_{t}y_{s}$. $y_{t}$ and is strictly and directly revealed as being preferred to $y_{s}$. This preference is denoted by $y_{t} \succ^{R, D} y_{s}$, if $p_{t}y_{t} > p_{t}y_{s} $.
 \end{definition}

 \begin{definition} [Revealed preferred relations]
$y_{t}$ is revealed preferred to $y_{s}$, written $y_{t} \succeq^{R} y_{s}$, when there is a chain $\left(y_{1}, y_{2}, \ldots, y_{n} \right)$ with elements on $X$ with $y_{1} = y_{t}$ and $y_{n} = y_{s}$ such that 

$y_{1} \succeq^{R, D} y_{2} \succeq^{R, D} \ldots \succeq^{R, D} y_{n}$. Also, $y_{t}$ is strictly revealed preferred to $y_{s}$, written  $y_{t} \succ^{R} y_{s}$, when at least one of the directly revealed relations in the revealedp referred chain is strict. 
\end{definition}

 These definitions compare pairs of bundles and states that the consumer chooses $y_{t}$ if $y_{s}$ is affordable. Using these definitions of binary relations, I define the WARP, WGARP, and the SARP.
 
\begin{definition} [WARP]
Given a finite data set $O^{T}=\left\lbrace p_{t}, y_{t}\right\rbrace_{t \in \mathbb{T}}$, the WARP holds when there is no pair of observations $(s, t) \in \mathbb{T} \times \mathbb{T}$ such that $y_{t} \succeq^{R, D} y_{s}$ and $y_{s} \succeq^{R, D} y_{t}$ with $y_{t} \neq y_{s}$.
\end{definition}
This axiom is called the \cite{samuelson1938note} 's WARP. \cite{kihlstrom1976demand} later introduced a generalized version of the WARP known as the WGARP.
\begin{definition} [WGARP]
Given a finite data set $O^{T}=\left\lbrace p_{t}, y_{t}\right\rbrace_{t \in \mathbb{T}}$, the WGARP holds when there is no pair of observations $(s, t) \in \mathbb{T} \times \mathbb{T}$ such that $y_{t} \succeq^{R, D} y_{s}$ and $y_{s} \succ^{R, D} y_{t}$.
\end{definition}
The WGARP thus holds when there is no pair of observations $(s, t)$ in which if a bundle of goods $y_{t}$ is directly revealed as being preferred to a bundle of goods $y_{s}$, then $y_{s}$ cannot be directly revealed as being preferred to $y_{t}$ in the strict sense. The WGARP is a weaker version of the generalized axiom of revealed preference (GARP). The following axiom is the SARP that was introduced by \cite{houthakker1950revealed}. The SARP adds transitivity to the WARP.
\begin{definition} [SARP]
Given a finite data set $O^{T}=\left\lbrace p_{t}, y_{t}\right\rbrace_{t \in \mathbb{T}}$, the SARP holds if there is no pair of observations $s, t \in \mathbb{T}$ such that $y_{t} \succeq^{R} y_{s}$ and $y_{s} \succeq^{R, D} y_{t}$,  with  $y_{t} \neq y_{s}$.
\end{definition}

\subsection{Preference Function}
\cite{shafer1974nontransitive} showed that preferences that are not transitive can have a numerical representation. Such a numerical representation is called a preference function.
\begin{definition}[Preference function]
A preference function, $r: X \times X \to \mathbb{R}$, maps ordered pairs of bundles to real numbers.
\end{definition}
 Given a nontransitive binary relation $\succeq \subseteq X \times X$ represented by $r$, $x$ is preferred to $y$ and is denoted by $x\succeq y$ whenever $r(x, y) \geq 0$; $x$ is said to be indifferent to $y$ that is denoted by $x \sim y$ whenever $r(x, y) \geq 0$ and $r(y, x) \geq 0$; $x$ is strictly preferred over $y$ that is denoted by $x\succ y$, if $r(x, y) \geq 0$ when $r(y, x)<0$. This representation allows for nontransitivity, since $r(x, y) \geq 0$ and $r(y,z) \geq 0$ together do not indicate the sign of  $r(x,z)$.\footnote{For example, in the set $\left\lbrace x, y, z\right\rbrace $ it could be the case that $r(x, y)=r(y, z)=1/2$ and $r(x, z)=-1/2$. Consequently, $x$ is chosen over $y$, $y$ over $z$, and $z$ over $x$. The preference function generalizes the idea of utility in the sense that $r$ can be defined by $r(x, y)= u(x)-u(y)$ if the consumer's preference is described by a single utility function $u$.} \vspace{0.25cm}
 
The following definition, which comes from AHS, illustrates the properties of the preference function.
\begin{definition}[Properties of the preference function]
Consider a preference function $r: X \times X \to \mathbb{R}$.
\begin{itemize}
 \item[(i)] $r$ is complete if for all $x, y \in X$, either $r(x, y) \geq 0$ or $r(y, x) \geq 0$.
 \item[(ii)] $r$ is continuous if for all $y \in X$ and every sequence $\left\lbrace x_{n}\right\rbrace _{n \in \mathbb{N}}$ of elements in $X$ converge to $x \in X$, $\lim_{n \to +\infty} r(x_{n}, y)=r(x, y)$ \label{def11}.
 \item[(iii)] $r$ strictly increases if for all $x, y, z \in X$, $x > z$ means that $r(x, y) > r(z, y)$.\label{def12}
 \item[(iv)] $r$ is piecewise concave if there is a set of concave functions in the first argument $f_{t}(x, y)$ for $t \in \mathbb{K}$, where $\mathbb{K}$ is a compact set, such that $r(x, y)=\max_{t \in \mathbb{K}}\left\lbrace f_{t}(x, y)\right\rbrace $; and strictly piecewise concave if there is a similar set of strictly concave functions. \label{def13}
 \item[(v)] $r$ is skew-symmetric if for all $x, y \in X$, $r(x, y) = -r(y, x)$.
\end{itemize}
\end{definition}
Each property is only checked on the first argument of the preference function. The continuity condition is helpful to ensure the existence of a maximum in the constrained maximization of the preference function \cite[][]{sonnenschein1971demand}. Strict monotonicity means that ``more is better" and is stronger than local nonsatiation, which excludes thick indifference curves. Piecewise concavity and its strict version are new properties that are particularly important for my characterizations of the WARP. When the preference function is piecewise concave, the preference relation it represents has the property of star sharpness, which is a weaker condition than convexity \cite[][]{aguiar2020rationalization}.\footnote{A preference relation $\succeq$ satisfies the star sharpness property of the upper contour set if for two alternatives $x, y \in X$ and $x \succeq y$; then $ax+(1-a)y \succeq y$, for all $a \in \left[0, 1\right] $.} \vspace{0.25cm} 
The following example illustrates the Shafer's preference function. 

\begin{example}[Shafer's preference function]{\label{ex2}}
The preference function of \cite{shafer1974nontransitive} is defined as follows:
$$r(x, y)=(y^{1})^{\frac{-1}{2}}(x^{2})^{\frac{1}{2}}+ \ln (x^{3})-(x^{1})^{\frac{-1}{2}}(y^{2})^{\frac{1}{2}}-\ln (y^{3})$$ for all $x=(x^{1}, x^{2}, x^{3}), y=(y^{1}, y^{2}, y^{3}) \in X=\mathbb{R}_{+}^{2} \times \mathbb{R}_{+}\setminus\left\lbrace 0\right\rbrace$, where $ln$ denotes the natural logarithm. This function is skew-symmetric, continuous, strictly increasing, and concave over the range $X \times X$. The preference relation associated with this preference function is intransitive (\citealp{kihlstrom1976demand}; \citealp{shafer1974nontransitive}).
\end{example}

The following definition provides the condition under which a preference function strictly rationalizes a finite set of observations.
\begin{definition} [Strict Preference Function Rationalization]{\label{def8}}
Consider a finite data set $O^{T}$ and a preference function, $r: X \times X \to \mathbb{R}$. The data set $O^{T}$ is strictly rationalized by $r$ if for all $t \in \mathbb{T}$, then $r(y_{t}, y)>0$ for all $y$ such that $p_{t}y_{t} \geq p_{t}y$ whenever $y_{t} \neq y$.
\end{definition}
According to this definition, a preference function is said to rationalize the data on different goods if, at each observation, it assigns a strictly higher ``strength" to the chosen bundle than to any other bundle that is weakly cheaper at the prevailing prices. \vspace{0.25cm} 

The demand function is determined as follows:
 Suppose $y_{t}$ is the optimal bundle at prices $p_{t}$ and income $w_{t}$, then $r(y_{t}, y) \geq 0$ for all $y$ such that $p_{t}y \leq w_{t}$. As $r(y_{t}, y)=-r(y, y_{t})$,  $r(y, y_{t})\leq 0$ for all $y \in X$ such that $p_{t}y \leq w_{t}$. Since $r(y_{t}, y_{t})=0$, $r(y_{t}, y)$ is at a maximum when $y_{t} = y$ for all $y$ such that $p_{t}y \leq w_{t}$. The Lagrangian is then defined as follows:
 $$L(x, y; \beta)=r(y, y_{t})+\beta (w_{t}-p_{t}y_{t})$$
 As in \cite{shafer1974nontransitive}, the first order conditions for the maximum
of  $$\frac{r_{l}(y, y_{t})}{r_{1}(y, y_{t})}=\frac{p^{l}}{p^{1}},~l=2,\ldots, L,~\text{and}~p_{t}y_{t}=w_{t}$$
 are satisfied for $y=y_{t}$, and $r_{l}$ is the partial derivative of $r$ with respect to $y^{l}$. Thus, to find $y_{t}$, I solve the $L$ equations for $y_{t}$.
$$\frac{r_{l}(y_{t}, y_{t})}{r_{1}(y_{t}, y_{t})}=\frac{p^{l}}{p^{1}},~l=2,\ldots, L,~\text{and}~p_{t}y_{t}=w_{t}$$ 
\vspace{0.25cm}

The following lemma is equivalent to Afriat's Theorem but for the WARP and uses a strict preference function rationalization.
\begin{lemma}{\label{lem1}}
Consider a finite data set $O^{T}=\left\lbrace p_{t}, y_{t}\right\rbrace_{t \in \mathbb{T}}$. $O^{T}$ is strictly rationalized by a continuous, strictly increasing, skew-symmetric, and (strictly) piecewise concave preference function if and only if it satisfies the WARP.
\end{lemma}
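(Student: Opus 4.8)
The statement is a biconditional, so I would prove the two implications separately. The direction ``strict rationalization $\Rightarrow$ WARP'' is immediate and uses only skew-symmetry together with Definition~\ref{def8}. The converse, ``WARP $\Rightarrow$ existence of a rationalizing preference function,'' is the substantive part: it is the exact analogue of Afriat's theorem in this nontransitive setting, and it requires an explicit construction of $r$ in the spirit of \cite{aguiar2020rationalization}.

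\textbf{Necessity.} Suppose a skew-symmetric preference function $r$ strictly rationalizes $O^{T}$ and, for contradiction, that WARP fails. Then there is a pair $(s,t)$ with $y_{t}\neq y_{s}$, $p_{t}y_{t}\geq p_{t}y_{s}$ and $p_{s}y_{s}\geq p_{s}y_{t}$. Applying Definition~\ref{def8} at observation $t$ to the affordable bundle $y_{s}$ gives $r(y_{t},y_{s})>0$, and applying it at observation $s$ to $y_{t}$ gives $r(y_{s},y_{t})>0$. Skew-symmetry forces $r(y_{s},y_{t})=-r(y_{t},y_{s})<0$, a contradiction. Hence WARP holds. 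Note that monotonicity, continuity, and piecewise concavity play no role in this direction.

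\textbf{Sufficiency.} Assume WARP. The plan is to build $r$ as a skew-symmetric, piecewise-concave function of the max-of-affine form permitted by property (iv). For each $t$ I would introduce an affine support function $\phi_{t}(x)=\lambda_{t}\,p_{t}(x-y_{t})$ with $\lambda_{t}>0$, which is concave (indeed affine) and, because $p_{t}\gg 0$, strictly increasing in $x$, with $\phi_{t}(y_{t})=0$ and $\phi_{t}(y)\leq 0$ for every bundle $y$ affordable at $t$. I would then set
\begin{equation*}
 r(x,y)=\max_{t\in\mathbb{T}}\bigl[\phi_{t}(x)+d_{t}(y)\bigr],\qquad d_{t}(y)=\min_{s\in\mathbb{T}}\bigl[c_{ts}-\phi_{s}(y)\bigr],
\end{equation*}
with skew constants $c_{ts}=-c_{st}$ to be pinned down from the data. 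Written this way, $r(\cdot,y)$ is a maximum of functions that are affine (hence concave) and strictly increasing in the first argument, so $r$ is continuous, strictly increasing, and piecewise concave; the strictly piecewise-concave version follows by adding a small strictly concave, strictly increasing term $\varepsilon\,g(x)-\varepsilon\,g(y)$ (for example $g(x)=\sum_{l}\sqrt{x^{l}+1}$) that preserves skew-symmetry. The constants $c_{ts}$ are chosen as Afriat-type numbers so that at each $t$ the branch $s=t$ already delivers $r(y_{t},y)>0$ for every affordable $y\neq y_{t}$, yielding strict rationalization in the sense of Definition~\ref{def8}.

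\textbf{Main obstacle.} The crux is to satisfy skew-symmetry and piecewise concavity \emph{simultaneously}, together with strictness. The naive separable choice $r(x,y)=V(x)-V(y)$ is skew-symmetric and concave but represents a transitive preference, and a transitive rationalization would force the data to satisfy SARP, which WARP does not imply once $L\geq 3$; the rationalizing $r$ must therefore be genuinely non-separable, which is exactly what the coupling through $d_{t}(y)$ provides. The technical work is then twofold: first, showing that a skew system $c_{ts}=-c_{st}$ admitting the required separation exists, which is where WARP enters decisively---its exclusion of two-cycles in the direct revealed-preference relation is precisely the consistency condition that makes the pairwise (strict) Afriat inequalities solvable; and second, verifying that the resulting max--min form is skew-symmetric, i.e.\ that the associated two-player zero-sum assignment has a saddle point so that $\max_{t}\min_{s}=\min_{s}\max_{t}$. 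I expect this saddle-point/consistency step, rather than the verification of monotonicity and concavity, to be the main difficulty, and I would handle it by adapting the WGARP construction of \cite{aguiar2020rationalization} and sharpening its weak inequalities to the strict ones demanded by WARP and Definition~\ref{def8}.
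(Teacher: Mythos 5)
Your necessity direction is exactly the paper's argument: strict rationalization applied at each of the two observations gives $r(y_{t},y_{s})>0$ and $r(y_{s},y_{t})>0$, which skew-symmetry forbids. No issues there.

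The sufficiency direction contains a genuine gap, and it sits precisely at the point you flag as ``the main difficulty.'' Your candidate $r(x,y)=\max_{t\in\mathbb{T}}\min_{s\in\mathbb{T}}\bigl[\phi_{t}(x)-\phi_{s}(y)+c_{ts}\bigr]$ is skew-symmetric only if the pure max--min over the \emph{finite} index set $\mathbb{T}\times\mathbb{T}$ equals the min--max, i.e.\ only if $F(t,s)=\phi_{t}(x)-\phi_{s}(y)+c_{ts}$ admits a pure saddle point for every pair $(x,y)$. For a finite matrix game this generically fails; the one structure of $c_{ts}$ that guarantees it is the separable case $c_{ts}=e_{t}-e_{s}$, but then $r$ collapses to $V(x)-V(y)$ with $V(x)=\max_{t}[\phi_{t}(x)+e_{t}]$, which is transitive and hence cannot rationalize WARP-consistent data that violate SARP --- the very obstruction you correctly identify. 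So as written the construction either loses skew-symmetry or loses nontransitivity, and ``sharpening weak inequalities to strict ones'' does not repair this: the fix is structural, not quantitative.

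The paper resolves it differently. It splits $O^{T}$ into the $T^{2}$ pairwise datasets $O^{2}_{ts}$; any two-observation dataset satisfying WARP satisfies SARP, so Matzkin's strict-rationalization theorem supplies continuous, strictly increasing, strictly concave utilities $u_{st}$, and one sets $r_{st}(x,y)=u_{st}(x)-u_{st}(y)$ (each $r_{st}$ skew-symmetric by construction). The aggregate is then the \emph{bilinear mixed extension}
\[
r(x,y)=\max_{\lambda\in\Delta}\min_{\mu\in\Delta}\sum_{s=1}^{T}\sum_{t=1}^{T}\lambda_{s}\mu_{t}\,r_{st}(x,y),
\]
where $\Delta$ is the probability simplex. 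Von Neumann's minimax theorem gives $\max_{\lambda}\min_{\mu}=\min_{\mu}\max_{\lambda}$ for this bilinear form, and that interchange is exactly what delivers skew-symmetry of $r$; piecewise concavity comes for free because $r(\cdot,y)$ is a max over the compact set $\Delta$ of concave functions, matching the paper's definition of (strict) piecewise concavity with $\mathbb{K}=\Delta$. Rationalization follows by evaluating the inner minimum at the degenerate $\mu$ concentrated on $t$. In short: your pairwise/Afriat instinct and your diagnosis of where WARP enters are both right, but the saddle point must be bought with mixed strategies over pairwise Matzkin utilities rather than asserted for a pure max--min over skew constants.
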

\begin{proof}
 See Appendix
\end{proof}

This Lemma provides a revealed-preference characterization of WARP for finite data sets and the model behind WARP. With this lemma, the analysis of the nonstochastic demand remains independent of a parametric specification of the underlying household preference functions. This lemma is analogous to Corollary $1$ in AHS. Compared to this corollary, I tighten the constraints on the preference function by requiring the WARP as opposed to the WGARP; to strictly increase as opposed to being monotonic; and, as a result, requiring strict rationalization as opposed to rationalization. This lemma generalizes \cite{matzkin1991testing}'s Theorem about strict rationalization with a monotone utility function.\footnote{\cite{matzkin1991testing}'s Theorem is a special case of Lemma \ref{lem1} when the preference function $r$ is defined as: $r(x, y)=u(x)-u(y)$, where $u$ is the only utility function that rationalizes the data.} This result highlights the difference between this paper and KS and DKSQ, as the WARP characterizes nonstochastic demand systems.\vspace{0.25cm}
 
A limitation of Lemma \ref{lem1} is that the test of the WARP requires repeated observations of a consumer across different price regimes \cite[][]{echenique2011money}. Therefore, it does not account for differences in consumer preferences across the population and may not be helpful if using repeated cross-sectional data. In the next section, following MR, I perform a WARP-based analysis in a setting with unobserved preference heterogeneity.\vspace{0.20cm}

The following example is a three-price-commodity pair that obeys the WARP but violates the SARP.
\begin{example}[Intransitive choices]{\label{ex3}}
 Consider the data set consisting of the following three choices:
 $$p_{1}=(2, 3, 3), ~x_{1}=(3, 1, 7), ~p_{2}=(3, 2, 3),~x_{2}=(7, 3, 1), ~p_{3}=(3, 3, 2), \text{and}~x_{3}=(1, 7, 3)$$ None of these price and consumption pairs violates the WARP. However, there is a cycle with this set of choices: $x_{1} \succ^{R, D} x_{2}$, $x_{2} \succ^{R, D} x_{3}$, and $x_{3} \succ^{R, D} x_{1}$.
Therefore, the SARP is violated because transitivity does not apply.
 Therefore, the weak axiom is valid, but no (locally nonsatiated) utility function rationalizes these data. The consumer with this behavior, therefore, is not a utility maximizer. According to the previous Lemma, there exist a continuous, strictly increasing, skew-symmetric, and (strictly) piecewise concave preference function that rationalizes these observations. 
\end{example}

\section{Stochastic Choice Setup}
\subsection{Stochastic Demand System Setup}
In this section, I develop the random version of the preference function model. I provide a stochastic revealed preference theory for consumers with intransitive behaviors similar to those described in examples \ref{ex2} and \ref{ex3}.  
$$ B_{t}=B_{t}(p_{t}, w_{t})=\left\lbrace y \in X: p_{t}^{'}y \leq w_{t} \right\rbrace$$
where $p_{t} \in \mathbb{R}^{L}_{++} $ is the vector of prices and $w_{t} \in \mathbb{R}_{++}$ is the income. $B_{t}$ is the hyperplane budget in period $t$. For every $t \in \mathbb{T}$, let $P_{t}$ there is a probability measure on the set of all the Borel measurable subsets $B_{t}$. The primitive in the demand framework is the collection of all observed $P_{t}$, $P =\left\lbrace P_{t}\right\rbrace_{t \in \mathbb{T}}$. This collection is called a \textit{stochastic} demand system. \vspace{0.20cm} 

I consider a population of consumers in which each is endowed with a fixed preference function $r: X \times X \to \mathbb{R}$, and each is unobserved by the analyst. I restrict attention to strictly increasing preference functions (``more is better").  

 \begin{example} [Random version of the Shafer's preference function] {\label{ex4}}
Considering the previous \cite{shafer1974nontransitive} preference function, given $x=(x^{1}, x^{2}, x^{3}),~ y=(y^{1}, y^{2}, y^{3}) \in X=\mathbb{R}_{+}^{2} \times \mathbb{R}_{+}\setminus\left\lbrace 0\right\rbrace$ and $\alpha \in \left(0, 1 \right)$, the generalized \cite{shafer1974nontransitive} preference function is defined as follows:
$$r_{\alpha}(x, y)=(x^{2})^{\alpha} (y^{1})^{\alpha-1}+ \ln (x^{3})-(y^{2})^{\alpha}(x^{1})^{\alpha-1}-\ln (y^{3})$$

In the appendix, I show that $r_{\alpha}$ is continuous, skew-symmetric, strictly increasing, and concave in $x$ for every fixed $y$. If $\alpha=1/2$, then $r_{1/2}$ is the classical preference function of \cite{shafer1974nontransitive}.
Assume that $\alpha$ is a random variable that follows a uniform distribution in the set $\left[0, 1 \right]$, then $r_{\alpha}$ is a random variable. Consider a population of households, each of which draws $\alpha$ from a uniform distribution supported by $\left[0, 1 \right]$. The preference functions are thus drawn randomly over a probability distribution. Hence, for each realization $\tilde{\alpha}$ of $\alpha$, the realized preference function is $\tilde{r}$, and the household behaves like a maximizer of $\tilde{r}$. The collection of $r_{\alpha}$ and the associated distributions are therefore a special case of RPMs.
\end{example}

In this example, the defined random preference function looks similar to a model with random parameters. Thus, a single random error in the parameters of the preference function transforms it into a different preference function within the same original space of preference functions. At the same time, the evaluation of the alternatives remains consistent.\vspace{0.25cm}

Given the \textit{stochastic} demand system $P$, I can define a nontransitive random demand model (NRDM). Denote $\mathcal{R}$ as the set of all continuous, strictly increasing, skew-symmetric, and (strictly) piecewise concave preference functions that map $X \times X$ to $\mathbb{R}$.

\begin{definition}{\label{def9}}
 The stochastic demand system $P =\left\lbrace P_{t}\right\rbrace_{t \in \mathbb{T}}$ is consistent with the NRDM or is \textit{stochastically} rationalizable if there is a probability measure over $\mathcal{R}$, $\mu$, such that:
 $$P_{t}(\mathcal{Y})=\int \mathbf{1}\left\lbrace y \in \mathcal{Y}:~r(y, z) \geq 0, ~\forall z \in B_{t} \right\rbrace d\mu(r)$$
 for all $t \in \mathbb{T}$ and for all the Borel measurable $\mathcal{Y} \subseteq X$.
\end{definition}
From this definition, all possible Borel sets of $ X$ must be examined to check whether a stochastic demand system $P$ is consistent with the NRDM. Later, I propose an equivalent characterization of \textit{stochastic} rationalizability that does not require examining all possible Borel sets.\vspace{0.25cm}

\subsection{Stochastic Choice Function Setup}
In this subsection, I show that stochastic rationality according to the RPMs depends solely on the probabilities associated with some areas of the budget hyperplane, which are referred to as \textit{patches}.
For any $t \in \mathbb{T}$, let $\left\lbrace x_{i \mid t}\right\rbrace_{i \in \mathcal{I}_{t}}$, $\mathcal{I}_{t}=\left\lbrace 0, 1, \ldots, I_{t}\right\rbrace$ denote a finite partition of $B_{t}^{*}$ (I index each element of the partition by $i$). $I_{t}$ is the number of elements in menu $B_{t}^{*}$ and $x_{i \mid t}$ is a subset of $B_{t}$.
\begin{definition}[Patches]
    Let $\cup_{t \in \mathbb{T}} x_{i \mid t}$ be the coarsest partition of $\cup_{t \in \mathbb{T}} B_{t}$
such that: 
$$ x_{i \mid t} \cap B_{t} \in \left\lbrace x_{i \mid t}, \emptyset\right\rbrace$$
for any $t, s \in \mathbb{T}$ and $i \in \mathcal{I}_{t}$. A set $x_{i \mid t}$ is called a patch. If $x_{i \mid t} \subseteq B_{s}$  for some $i$ and $t \neq s$ exist, then $x_{i \mid t}$ is called an intersection patch.
\end{definition}
    
By definition, patches can only be strictly above, strictly below, or on budget hyperplanes. A typical patch belongs to one budget hyperplane. However, intersection patches always belong to several budget hyperplanes. The case for period, $L = 2$ goods and $T=2$ budgets, is depicted in \ref{fig1}. \vspace{0.25cm}

The (discretized) choice set is:
$$X^{*}=\cup_{i \in \mathcal{I}_{t}, t \in \mathbb{T}} \left\lbrace x_{i \mid t} \right\rbrace$$
 
I define a menu as the collection of patches from the same budget hyperplane: 
$$ B_{t}^{*}=\left\lbrace x_{i \mid t}\right\rbrace_{i \in \mathcal{I}_{t}} \subseteq  X^{*}$$
Henceforth, I refer to a menu or budget interchangeably. For every $t \in \mathbb{T}$, $\rho_{t}$ is a probability measure on $B_{t}^{*}$. $\rho_{t} (x_{i \mid t})=\rho_{i \mid t} \geq 0$ for all $i \in \mathcal{I}_{t}$ and $\Sigma_{i \in \mathcal{I}_{t}}\rho_{i \mid t} =1$. The primitive in my framework is the collection of all observed $\rho_{t}$, $\rho=\left(\rho_{t}\right)_{t \in \mathbb{T}}$. I call this collection a \textit{stochastic choice function}. \footnote{Let $>$ be a binary relation with $X^{*}$. Completeness: For all $A, B \in X^{*}$, $A>B$ or $B>A$. Transitivity: For all $A, B, C \in X^{*}$, if $A>B$, $B>C$, then $A>C$. I allow the violation of transitivity in this setting, but I do not allow preference reversal: for all $A, B \in X^{*}$, if $A>B$, then $\neg(B>A)$. That is, if $A$ is strictly preferred to $B$, then $B$ cannot be strictly preferred to $A$.}\vspace{0.25cm}
 
Let $\rho_{t}(x_{i \mid t})=P_{t}(x_{i \mid t})$ denote the fraction of consumers who pick from a choice set $x_{i \mid t}$ given a budget set $t$. The most important building block of my demand framework is the \textit{stochastic} choice function,
 $$\rho=\left( \rho_{t}(x_{i \mid t}) \right)_{i \in \mathcal{I}_{t}, t \in \mathbb{T}}=\left(\rho_{i \mid t} \right)_{i \in \mathcal{I}_{t}, t \in \mathbb{T}} $$
The vector $\rho$ represents the distribution over a finite group of patches and contains all the essential details required to determine whether $P$ aligns with the NRDM.
Given a period $t$, a representative consumer chooses a patch $x_{i \mid t}$ from the discretized budget $B^{*}_{t}$. This representative consumer carries the characteristics of all consumers in the population whose nonstochastic choices in period $t$ fall within the patch $x_{i \mid t}$. Thus, there is a realization $r^{*}$ of the preference function $r$ with probability $\rho_{t}(x_{i \mid t})$ that generates nonstochastic choices belonging to $x_{i \mid t}$. 


\begin{example}
Considering Figure \ref{fig1}, this example has a total of five patches; namely, the intersection and the four line segments identified in the figure: $x_{0\mid 1}$, $x_{1\mid 1}$, $x_{0 \mid 2}$, and $x_{1\mid 2}$. If the intersection patches are disregarded, the vector representation of $(P_{1},P_{2})$ is $\left(\rho_{0\mid 1}, \rho_{1\mid 1}, \rho_{0 \mid 2}, \rho_{1\mid 2}\right) $.
\end{example}

\begin{figure}
  \caption{Choice probabilities with $L=2$ goods and two budgets $B_{1}^{*}=\left\lbrace x_{0\mid1}, x_{1\mid1} \right\rbrace $ and $B_{2}^{*}=\left\lbrace x_{0\mid2}, x_{1\mid 2} \right\rbrace$, $\hat{\rho}_{0 \mid 2}=2/3$,  $\hat{\rho}_{1 \mid 2}=1/3$, $\hat{\rho}_{0 \mid 1}=1/4$, and $\hat{\rho}_{1 \mid 1}=3/4$}{\label{fig1}.}
\begin{tikzpicture}[scale=0.6]Axis
\draw [->] (0,0) node [below] {0} -- (0,0) -- (6,0) node [right] {$y_{1}$};
\draw [->] (0,0) node [below] {0} -- (0,0) -- (0,6) node [above] {$y_{2}$};
\node [below][scale=0.7] at (5,0) {$B_{1}^{*}$};
\node [left][scale=0.7] at (0, 2.5) {};
\node [below][scale=0.7] at (2.5,0) {$B_{2}^{*}$};
\node [left][scale=0.7] at (0, 5) {};
\draw [ultra thick] (0, 2.5)--(5,0) node[above][scale=0.7]{};
\draw [ultra thick] (0,5)--(2.5, 0) node[above left][scale=0.7]{};
\draw[thick] (1.6, 1.8) -- (2.5, 0);
\draw[thick] (0, 2.5) -- (1.6, 1.8);
\node [below][scale=0.7] at (1.2, 4) {$x_{0 \mid 2}$};
\node [below][scale=0.7] at (1.6, 1) {$x_{1 \mid 2}$};
\node [below][scale=0.7] at (0.8, 2) {$x_{0 \mid 1}$};
\node [below][scale=0.7] at (4, 1.4) {$x_{1 \mid 1}$};
\end{tikzpicture}
  \end{figure}

\section{Rationalization by RPMs}
I now delve into an observed repeated cross-sectional data set $S= \left\lbrace p_{t}, \rho_{t}\right\rbrace_{t \in \mathbb{T}}$, where each observation represents the prevailing price $p_{t}$ and the corresponding distribution of choices made by a population of consumers. This data set also reflects the distribution of demand in the population at this price and is represented by the probability measure $\rho_{t}$ on $B_{t}^{*}$ for all $t \in \mathbb{T}$. 
\vspace{0.20cm}

\begin{definition} \label{def3}
The RPM rationalizes the repeated cross-sectional data set $S=\left\lbrace p_{t}, \rho_{t}\right\rbrace_{t \in \mathbb{T}}$ when there is a probability space $(\Theta, \mathcal{F},\mu)$ and a random variable $y:\Theta \to \mathbb{R}^{LT}_{+}$. With these conditions,  $\left\lbrace p_{t}, y_{t}(\theta)\right\rbrace_{t \in \mathbb{T}}$ almost surely can be strictly rationalized by a continuous, strictly increasing, skew-symmetric, piecewise concave preference function (equivalently, obeys the WARP). Then,

$\rho_{t}(\mathcal{Y}) =\mu \left( \left\lbrace \theta \in \Theta: y_{t}(\theta) \in \mathcal{Y} \right\rbrace \right)$, where $\mathcal{Y}\subseteq \mathbb{R}^{L}_{+}$ is a measurable set of $\cup_{i \in \mathcal{I}_{t}} x_{i \mid t}$.
\end{definition}

 In this definition, $\Theta$ can be interpreted as the population of consumers and $y_{t}(\theta)$ as the demand of consumer type $\theta$ at observation time $t$ when the prevailing price is $p_{t}$. $\mu$ is the true distribution that rationalizes the data, and the consumer types that support $\mu$ must fit the preference model.\footnote{Rationalizing the data set $S$ involves determining whether $S$ is generated by maximizing preference functions over a distribution of such functions that remain stable over the entire observation period. Alternatively, the model can be interpreted such that the preference function of each individual changes over time, but the distribution in the population is stationary.} A consumer's type is an observed demand system $\left\lbrace y_{t}(\theta)\right\rbrace_{t \in \mathbb{T}} $ such that for all $t \in \mathbb{T}$, $y_{t} \in \cup_{i \in \mathcal{I}_{t}} x_{i \mid t}$. A consumer's type is WARP-consistent when $\left(y_{1}(\theta), \ldots, y_{T}(\theta)\right)$ (together with the associated prices) satisfies the WARP. 

\subsection{Weak Axiom of Revealed Stochastic Preference}
Here, I provide a revealed-preference characterization of the weaker version of \cite{mcfadden2006revealed}'s ASRSP that deals with the preference function maximization instead of the utility maximization. If the RPM rationalizes the data set $S= \left\lbrace p_{t}, \rho_{t} \right\rbrace_{t \in \mathbb{T}}$; then there is a realization $r^{*}$ of the preference function $r$ and then the vector $a(r^{*})$ defined as follows:\vspace{0.20cm}
$a(r^{*})=\left(a_{1}(r^{*}), \ldots,  a_{T}(r^{*})\right)^{'}$, where $a(r^{*})$ 
 represents a choice pattern over the choice set represented by $r^{*}$. For all  $t \in \mathbb{T}$ and for all $i \in \mathcal{I}_{t}$:
 \begin{align*}
   a_{i \mid t}(r^{*})=
  \begin{cases}
                                   1 & if ~patch~x_{i \mid t}~\text{is chosen from}~B^{*}_{t}\\
                                   0 &  \text{Otherwise} \\
  \end{cases}
 \end{align*}
The vector $\left(a_{1}, \ldots,  a_{T} \right)^{'}$ is called permissible if it is rationalizable in terms of a preference function. That is if it can be written as: 
 
$$ a_{i \mid t}(r^{*})=\mathbf{1}\left\lbrace  y_{t} \in x_{i \mid t}:~r^{*}(y_{t}, z_{t}) \geq 0, ~p_{t}y_{t}\geq p_{j} z_{t},  \forall z_{t} \in B_{t} \right\rbrace$$
Given a finite discrete set $I$, $\mathcal{M}(I)$ and $\mathcal{M}^{*}(I)$ are sets of ordered subsets in which I allow or do not allow repetition, respectively. For example, given $I=\left\lbrace 2, 3, 4, 5\right\rbrace$, $\mathcal{M}(I)$ includes $\left\lbrace 2, 3\right\rbrace$, $\left\lbrace 2, 3, 4, 4\right\rbrace$, and $\left\lbrace 2, 4, 3, 4\right\rbrace$ and so on. $\mathcal{M}^{*}(I)$ does not include the second and third elements in the above example. $\mathcal{M}(I)$ is infinite, but $\mathcal{M}^{*}(I)$ is finite if $I$ is finite.\vspace{0.25cm}

The following axiom is the counterpart to the \cite{mcfadden2006revealed}'s axiom for (static) stochastic revealed preferences.

\begin{definition}[Weak Axiom of Revealed Stochastic Preference, WARSP]
A stochastic choice function $\rho=\left(\rho_{i \mid t} \right)_{i \in \mathcal{I}_{t}, t \in \mathbb{T}}$ satisfies WARSP if for any $M \in \mathcal{M}(\mathbb{T})$ and for any $x_{i \mid t}$, $t \in M$ and $i \in \mathcal{I}_{t}$,
$$\sum_{t \in M}\rho_{i \mid t} \leq \max_{\theta \in \Theta} \sum_{t \in M} a_{i\mid t}(r^{*}_{\theta})$$.   
\end{definition}

A stochastic choice function $\rho=\left(\rho_{i \mid t} \right)_{i \in \mathcal{I}_{t}, t \in \mathbb{T}}$ satisfies the WARSP with respect to the hypothetical preference functions $\mathcal{R}$ if the condition above holds.\footnote{$\mathcal{R}$ is the set of all continuous, strictly increasing, skew-symmetric, and (strictly) piecewise concave preference functions. $\mathbb{T}=\left\lbrace 1, \ldots, T \right\rbrace $.} The WARSP states that the sum of choice probabilities over a finite sequence of trials $\left\lbrace \left(x_{i \mid t}, B_{t}^{*} \right)\right\rbrace_{i \in \mathcal{I}_{t}, t \in \mathbb{T}}$ is not greater than the maximum number of successes that an admissible preference function can generate. The WARSP provides conditions that allow rationality to be tested without restricting heterogeneity and the number of goods. Compared to the ARSP, the WARSP allows for a less restrictive modeling of preferences. This is an advantage in complex decision-making where transitivity may not apply. \vspace{0.25cm}

The following assumptions are helpful for the stochastic revealed-preference characterization of WARSP for finite data sets.

\begin{assumption}{\label{ass1}}
Every demand correspondence is non-empty.
\end{assumption}

\cite{aguiar2020rationalization} noted that WARP can lead to empty demand correspondences. To address this, they introduced a concept called \textit{total coherency in segments}, which ensures that the demand correspondence is always non-empty. Under this assumption, the entire demand correspondence is not empty and falls into a specific patch.
Given a period $t$ and a budget $B_{t}=\cup_{i \in \mathcal{I}_{t}} x_{i \mid t}$, then $\sum_{i \in \mathcal{I}_{t}} \rho_{i \mid t}=1$, with $\rho_{i \mid t} >0$ for all $i \in \mathcal{I}_{t}$.\vspace{0.25cm}

Suppose $S$ is RPM-rationalized and suppose that with positive probability, there is a tie between two alternatives $y_{t}$ and $z_{t}$ in a given period $t$. Let $\rho_{t} \left(y_{t}, \left\lbrace z_{t}, y_{t} \right\rbrace \right)$ be the probability of choosing $y_{t}$ from the binary menu $\left\lbrace z_{t}, y_{t} \right\rbrace$.

\begin{align*}
\rho_{t} \left(y_{t}, \left\lbrace z_{t}, y_{t} \right\rbrace \right) + \rho_{t} \left(z_{t}, \left\lbrace z_{t}, y_{t} \right\rbrace \right) &=\mu  \left\lbrace \theta \in \Theta: r^{*}_{\theta}(y_{t}, z_{t}) \geq 0 \right\rbrace+ \mu \left\lbrace \theta \in \Theta: r_{\theta}^{*}(z_{t}, y_{t}) \geq 0 \right\rbrace \\
&=\mu  \left\lbrace \theta \in \Theta: r^{*}_{\theta}(y_{t}, z_{t}) > 0 \right\rbrace + \mu \left\lbrace \theta \in \Theta: r_{\theta}^{*}(z_{t}, y_{t}) > 0 \right\rbrace \\
& + 2\mu \left\lbrace \theta \in \Theta: r_{\theta}^{*}(z_{t}, y_{t}) =0\right\rbrace\\
&> 1~ \text{(contradiction)}.
\end{align*}

Thus, there are no ties without loss of generality and ties are excluded by construction, as only strict non transitive preferences with positive probability are realized. This is the purpose of the following assumption. \vspace{0.20cm}

\begin{assumption}[Intersection patches]{\label{ass2}}
 For any $t, s \in \mathbb{T}$, $\rho_{t}\left(y \in X^{*}: y \in B_{t}^{*} \cap B_{s}^{*}\right)=0$
\end{assumption}

There is no probability mass at the intersection of any pair of hyperplane budgets. This assumption allows us to exclude intersection patches in our RPM characterization. Its advantage is that it helps generate sufficient variation in preferences and demand. Under this assumption, rationalization by the WARP and the WGARP leads to the same result.

\subsection{Nonparametric Characterization of RPMs}
The following theorem provides a testable condition over the demand distribution for deciding whether a RPM can rationalize a given repeated cross-sectional data set $S$.

\begin{theorem}{\label{theo1}}
Given $L \geq 2$ and the data $S= \left\lbrace p_{t}, \rho_{t} \right\rbrace_{t \in \mathbb{T}}$satisfy the assumptions \ref{ass1} and \ref{ass2}, the following are equivalent: 
\begin{enumerate}
\item[(i)]$S$ is rationalized by the RPM. \label{theo21}\vspace{0.20cm}
\item[(ii)] There is a probability vector $\nu \in \Delta^{H-1}$ in which the discretized choice probability $\rho$ satisfies the condition~ $\Gamma \nu=\rho$.\footnote{$\Delta^{H-1}=\left\lbrace \left( \nu_{1}, \ldots, \nu_{H-1}\right) \in \mathbb{R}^{H-1}_{+}: \sum_{i=1}^{H-1} \nu_{i} \leq 1;~1 \leq i \leq H-1\right\rbrace $ is a $H$-dimensional simplex and $H < +\infty$.} \label{theo22}
\item[(iii)] $\rho=\left\lbrace \rho_{t}\right\rbrace_{t \in \mathbb{T}}$ satisfies the WARSP.\label{theo23}
\item [(iv)]  The stochastic demand system $P =\left\lbrace P_{t}\right\rbrace_{t \in \mathbb{T}}$ is \textit{stochastically} rationalizable.
\end{enumerate}
\end{theorem}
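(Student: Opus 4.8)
The plan is to establish the four-way equivalence by proving a cycle of implications, relying on Lemma~\ref{lem1} to translate between preference-function rationalization and the WARP at the individual level. I would organize the argument as (i)~$\Rightarrow$~(iv)~$\Rightarrow$~(ii)~$\Rightarrow$~(iii)~$\Rightarrow$~(i), since each step is most natural in that direction. The conceptual engine throughout is that, by Lemma~\ref{lem1}, a nonstochastic type is strictly rationalizable by a continuous, strictly increasing, skew-symmetric, (strictly) piecewise concave preference function \emph{if and only if} its price--quantity sequence obeys the WARP. This lets me replace the hard-to-handle space $\mathcal{R}$ of preference functions by the combinatorially finite set of WARP-consistent choice patterns over the patches, which is the key reduction that makes the finite-dimensional matrix condition in (ii) and the inequality system in (iii) tractable.

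First I would show (i)~$\Rightarrow$~(iv): if the RPM rationalizes $S$ via a probability space $(\Theta,\mathcal{F},\mu)$ and demand map $y_t(\theta)$ whose realizations obey the WARP almost surely, then by Lemma~\ref{lem1} almost every type is strictly rationalized by some $r\in\mathcal{R}$; pushing $\mu$ forward along the (measurable selection of a) type-to-preference-function correspondence yields a probability measure on $\mathcal{R}$ that reproduces each $P_t$ in the sense of Definition~\ref{def9}, giving stochastic rationalizability of $P$. The measurability of this selection is a technical point I would handle by invoking Assumptions~\ref{ass1} and~\ref{ass2} (non-empty demand and no mass on intersection patches), which guarantee that the relevant events are well defined and that ties can be ignored. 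The core of the theorem is then (iv)~$\Rightarrow$~(ii): here I discretize. Because rationality depends only on which patch is chosen from each budget, I enumerate the finitely many WARP-consistent choice patterns $a(r^*)\in\{0,1\}^{\sum_t(I_t+1)}$ and assemble them as the columns of a matrix $\Gamma$ (there are $H<\infty$ of them). Any measure on $\mathcal{R}$ induces, via the patch-selection map, a distribution over these columns, i.e.\ a vector $\nu\in\Delta^{H-1}$ with $\Gamma\nu=\rho$; conversely any such $\nu$ gives a finitely supported rationalizing measure. The essential finiteness claim---that only finitely many patch-choice patterns are WARP-permissible and that each is realized by some admissible preference function---is exactly where Lemma~\ref{lem1} and Example~\ref{ex3} do their work, and it is the heart of the reduction.

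Next, (ii)~$\Rightarrow$~(iii) is the linear-programming/duality step: the solvability of $\Gamma\nu=\rho$ over the simplex is equivalent, by a Farkas-type or Minkowski--Weyl argument (this is where I would mirror the proof strategy of KS and MR), to the system of inequalities $\sum_{t\in M}\rho_{i\mid t}\le\max_{\theta\in\Theta}\sum_{t\in M}a_{i\mid t}(r^*_\theta)$ for every $M\in\mathcal{M}(\mathbb{T})$, which is precisely the WARSP. Intuitively, the right-hand side is the largest column-sum of $\Gamma$ over the index set selected by $M$, and feasibility of the stochastic choice vector in the column cone forces each such partial sum of $\rho$ to be dominated by what a single admissible type can achieve. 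Finally (iii)~$\Rightarrow$~(i) closes the loop: given WARSP, the same inequality system certifies that $\rho$ lies in the convex hull of the permissible patterns, so a rationalizing $\nu$ exists, and unpacking each selected pattern back into an actual preference function via Lemma~\ref{lem1} reconstructs the probability space and demand map required by Definition~\ref{def3}.

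The main obstacle I anticipate is the (iv)~$\Leftrightarrow$~(ii) reduction from the infinite-dimensional measure space over $\mathcal{R}$ to the finite matrix condition---specifically, proving that the patch-choice map from preference functions to $\{0,1\}$-patterns is measurable and surjective onto the WARP-permissible patterns, so that no rationalizable pattern is lost and no spurious pattern is introduced. This requires carefully using Assumption~\ref{ass2} to discard intersection patches (so that each type lands in exactly one patch per budget) and Assumption~\ref{ass1} to rule out empty-valued demand, together with a realizability argument---constructing, for each WARP-consistent pattern, an explicit $r\in\mathcal{R}$ generating it, which is where the (strict) piecewise-concavity and skew-symmetry built into Lemma~\ref{lem1} are indispensable. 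The equivalence with (iii) via WARSP, by contrast, is a finite-dimensional convex-geometry argument that should follow the established KS template once the matrix $\Gamma$ is in place.
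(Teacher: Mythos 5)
Your proposal is correct and follows essentially the same route as the paper: reduce to the finite set of WARP-consistent patch-choice patterns via Lemma~\ref{lem1}, encode them as the columns of $\Gamma$ so that rationalizability becomes the feasibility of $\Gamma\nu=\rho$ over the simplex, invoke the McFadden--Richter linear-programming duality for the WARSP equivalence, and reconstruct a rationalizing measure (spreading mass within patches, which Assumptions~\ref{ass1} and~\ref{ass2} make innocuous) for the converse. The only difference is organizational --- you prove a cycle (i)$\Rightarrow$(iv)$\Rightarrow$(ii)$\Rightarrow$(iii)$\Rightarrow$(i) whereas the paper proves pairwise equivalences hubbed at (ii) --- which does not change the substance of any step.
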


\begin{proof}
The equivalence between $(i)$ and $(ii)$ is adapted from \cite{deb2023revealed}'s proof of Theorem $1$. The equivalence between $(ii)$ and $(iii)$ is adapted from \cite{mcfadden2006revealed}. The equivalence between $(ii)$ and $(iv)$ is adapted from \cite{kitamura2018nonparametric}'s proof of Theorem $3.1$.
\end{proof}

According to Theorem \ref{theo1}, the data $S$ are only rationalized by the random preference function models if the probability of choice $\rho$ is a convex combination of the columns of $\Gamma$. The weights $\nu$ can be interpreted as the implied population distribution over rational choice types. The dimension of the rational demand matrix $\Gamma $ is $d_{\rho} \times H$, where $d_{\rho}$ is the length of $\rho$ and $H$ is the number of columns or rational types. $\Gamma$ is a matrix of zeros and ones that captures all WARP-consistent types. The above theorem is an analog of Theorem $1$ in DKSQ and of Theorem $3.1$ in KS. The difference lies in the size of the rational demand matrix that has more columns than the rational demand matrix of KS and DKSQ.\vspace{0.20cm}

Statement $2$ in Theorem \ref{theo1} is the empirical condition of the RPM, which is amenable to statistical testing. This statement is a nonparametric characterization of RPMs. It is robust to errors that could result from misspecification of the functional form of the preference function or the unobserved heterogeneity distribution. Statement $3$ reflects that the WARSP characterizes RPMs similarly to how the ARSP characterizes RUMs. Thus, this theorem shows that the WARSP is necessary and sufficient to rationalize stochastic choices through an RPM. This result complements the long-standing assertion that the WARSP is a necessary but not sufficient condition for stochastic demand behavior to be rationalizable in terms of stochastic orderings on the commodity space \cite[][]{cosaert2018nonparametric, hoderlein2014revealed, kawaguchi2017testing, dasgupta2007regular, bandyopadhyay1999stochastic}. The reasoning is that the WARP alone does not reflect utility maximization. However, as the WARP reflects preference function maximization, I can consider multiple choice situations and check the WARSP between all pairs of choices without imposing transitivity as an additional restriction as for the SARSP.\footnote{The SARSP uses transitivity in addition to the property that preferences have a unique maximum. This condition is weaker than ARSP but provides a test that can be implemented with little computational cost \cite[][]{kawaguchi2017testing}.}

Statement $4$ reflects that to establish that a stochastic demand system $P$ is consistent with the NRDM, not all possible Borel sets need to be checked as suggested in definition \ref{def9}. So, considering the cross-sectional probabilities of the patches in the respective budgets is enough to characterize the rationalizability of the stochastic demand system $P$.\vspace{0.20cm}

\begin{example}[Consumers types]
In example \ref{figur3}, which represents a two-budget case, there are three rational types of consumers. The fourth figure has an irrational type since the WARP is violated there. Hence, the only excluded behavior is $\left(0, 1, 1, 0 \right)$. The set of rational types are $\Theta=\left\lbrace \theta_{1}, \theta_{2}, \theta_{3} \right\rbrace $. If I disregard the intersection patches, then both of the matrices for the WARP- and WGARP-consistent types are defined as follows:
$$\Gamma^{*}=
\left( {\begin{array}{ccc}
1 & 0& 1\\
0 & 1 & 0\\
1&0 &0 \\
0 &1 &1
\end{array} }\right)\begin{array} {c}
x_{0 \mid 2}\\
x_{1 \mid 2}\\
x_{0 \mid 1}\\
x_{1 \mid 1}
\end{array}.$$
\end{example}
\subsection{Relationship Between RUMs and RPMs}
In this section, I examine the relationship between rationalization by the RUM and rationalization by the RPM. The following proposition states that rationalization by the RUM implies rationalization by the RPM.

\begin{proposition} {\label{prop1}}
    Given $L \geq 2$ and $S= \left\lbrace p_{t}, \rho_{t} \right\rbrace_{t \in \mathbb{T}}$ satisfying the assumptions \ref{ass1} and \ref{ass2}; if $S$ is rationalized by the RUM, then it is also rationalized by the RPM.
\end{proposition}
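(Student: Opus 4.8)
The plan is to reduce the statement to the matrix characterizations already in hand, exploiting the fact that the only difference between RUM- and RPM-rationalization is which finite collection of patch-choice patterns (the \emph{types}) is declared admissible. Rationalization by the RUM is, by the KS/DKSQ characterization that Theorem \ref{theo1} mirrors, equivalent to the existence of a probability vector $\nu_A$ with $A\nu_A = \rho$, where the columns of the rational demand matrix $A$ range over all \emph{SARP-consistent} patch-choice patterns. Rationalization by the RPM is, by the equivalence of $(i)$ and $(ii)$ in Theorem \ref{theo1}, equivalent to $\Gamma\nu = \rho$ for some $\nu\in\Delta^{H-1}$, where the columns of $\Gamma$ range over all \emph{WARP-consistent} patterns. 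Both matrices are built on the same patch structure $\{x_{i\mid t}\}$, since the discretization of the data $S$ is common to the two models, so their columns live in the same ambient space $\{0,1\}^{d_\rho}$ and can be compared directly.

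The key step is the purely combinatorial observation that \emph{every SARP-consistent type is WARP-consistent}. This is immediate from the definitions: since $\succeq^{R,D}\subseteq\,\succeq^{R}$, any WARP violation---a direct two-cycle $y_t \succeq^{R,D} y_s$ and $y_s \succeq^{R,D} y_t$ with $y_t\neq y_s$---is \emph{a fortiori} a SARP violation; contrapositively, a pattern consistent with SARP is consistent with WARP. Conceptually this is the embedding $r_u(x,y)=u(x)-u(y)$ from the footnote: a utility maximizer is a maximizer of the skew-symmetric, (piecewise) concave preference function it induces, so its choice pattern is one of the admissible RPM types. Consequently the set of columns of $A$ is contained in the set of columns of $\Gamma$.

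With this inclusion, the conclusion follows by a change of weights. I would encode the inclusion as $A = \Gamma Q$, where $Q$ is a $\{0,1\}$-matrix with exactly one $1$ in each column, picking for every SARP-consistent column of $A$ a $\Gamma$-column equal to it. Given the RUM-rationalization $\rho = A\nu_A$, set $\nu = Q\nu_A$. Since $Q$ has nonnegative entries and unit column sums, it maps the simplex into the simplex, so $\nu\in\Delta^{H-1}$; and $\Gamma\nu = \Gamma Q\nu_A = A\nu_A = \rho$. Statement $(ii)$ of Theorem \ref{theo1} then yields that $S$ is rationalized by the RPM, which is the desired conclusion.

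I expect the main obstacle to be bookkeeping rather than conceptual: one must verify that the two characterizations are genuinely built on the \emph{same} patch partition and under the same Assumptions \ref{ass1} and \ref{ass2}, so that the column-inclusion and the weight transport are well defined. In particular, under Assumption \ref{ass2} intersection patches carry no mass and WARP- and WGARP-rationalization coincide, as used in Theorem \ref{theo1}; I should confirm that the SARP-consistent columns forming $A$ are recorded on exactly this reduced patch system, so that no SARP-consistent pattern is inadvertently lost or duplicated when identified with a column of $\Gamma$. Once that alignment is checked, the inclusion $A = \Gamma Q$ and the identity $\Gamma\nu=\rho$ are routine.
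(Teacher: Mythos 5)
Your proof is correct, and it rests on the same essential fact as the paper's --- that SARP implies WARP, because a direct two-cycle violating WARP is in particular a cycle violating SARP --- but you route the argument through a different layer of the theory. The paper's own proof is a two-line argument at the level of Definition \ref{def3}: a RUM rationalization supplies a probability space $(\Theta,\mathcal{F},\mu)$ and a random demand $y(\theta)$ such that $\left\lbrace p_{t}, y_{t}(\theta)\right\rbrace_{t\in\mathbb{T}}$ almost surely obeys SARP, hence almost surely obeys WARP, and the very same $(\Theta,\mathcal{F},\mu,y)$ is therefore an RPM rationalization; no discretization is invoked at all. You instead pass to the finite matrix characterizations on both sides --- the KS-style condition $A\nu_A=\rho$ over SARP-consistent types for the RUM, and statement $(ii)$ of Theorem \ref{theo1} for the RPM --- and transport weights via the column inclusion $A=\Gamma Q$. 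This is valid: the patch partition is built from the budgets alone and so is common to both models (as the equal row counts of the two $\Gamma$ matrices in Tables \ref{tab:mytable1} and \ref{tab:mytable2} reflect), $Q$ maps the simplex into the simplex, and both directions of the equivalences you need are available. What your version buys is an explicit combinatorial picture of \emph{why} the implication holds --- the SARP-consistent columns sit inside the WARP-consistent columns, which also explains the paper's repeated remark that $\Gamma$ has more columns than the KS matrix --- at the cost of invoking two nontrivial characterization theorems where the paper needs only the definitions. The measure-theoretic route is the more elementary and more general one (it would survive even without a finite-type characterization); yours is the more informative one about the geometry of the two cones.
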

Thus, if a repeated cross-sectional data set $S$ is RUM-rationalized, it is also RPM-rationalized. The reverse is not necessarily true since the WARP is necessary but not sufficient for utility maximization. 

The RPM and the RUM are identical if the dimension of the commodity space is restricted to two. When dealing with more than two commodities, RPM makes perfect sense, and the RUM is only a particular case of the RPM. Similarly, the ARSP is a specific case of the WARSP if the preference function is defined as follows: $r^{*}_{\theta}(y, z)=u^{*}_{\theta}(y)-u^{*}_{\theta}(z)$. In this case, the event where a consumer picks an alternative $x_{i \mid t}$ from the menu $B_{t}^{*}$ is defined as follows:
$$ a_{i \mid t}(r^{*}_{\theta})=\mathbf{1}\left\lbrace  y_{t}\in \argmax_{y \in B_{t},~ p_{t}^{'}y=1} u^{*}_{\theta}(y)=x_{i \mid t} \right\rbrace$$
 The following corollary reflects the relationship between the test for stochastic rationality for the data set $S=\left\lbrace p_{t}, \rho_{t} \right\rbrace_{t \in \mathbb{T}}$ and the test for stochastic rationality for all pairs of observations.
 
\begin{corollary}{\label{cor1}}
Given $L \geq 2$ and the data set $S= \left\lbrace p_{t}, \rho_{t} \right\rbrace_{t \in \mathbb{T}}$ satisfy assumptions \ref{ass1} and \ref{ass2}; if the data are rationalized by the RPM, there is a unique probability vector $u_{ts} \in \Delta^{3}$ for all $t, s \in \mathbb{T}$, such that $\Gamma^{*} u_{ts}=\rho_{ts}$, where $\rho_{ts}=\left (\rho_{0 \mid t}, \rho_{1 \mid t}, \rho_{0 \mid s}, \rho_{1 \mid s}\right)^{'}$.
\end{corollary}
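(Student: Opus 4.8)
The plan is to derive the existence of $u_{ts}$ from Theorem \ref{theo1} applied to the two-period subproblem and to obtain uniqueness from the fact that the pairwise type matrix $\Gamma^{*}$ has full column rank. For existence, I would first note that RPM-rationalizability is inherited by subcollections of budgets: if a probability measure $\mu$ over $\mathcal{R}$ rationalizes $S=\{p_{t},\rho_{t}\}_{t\in\mathbb{T}}$ in the sense of Definition \ref{def3}, then the same $\mu$ rationalizes the restricted data $\{p_{\tau},\rho_{\tau}\}_{\tau\in\{t,s\}}$, because the defining conditions for periods $t$ and $s$ are a subset of those imposed for all of $\mathbb{T}$. Applying the equivalence between $(i)$ and $(ii)$ of Theorem \ref{theo1} to this two-period problem then yields a probability vector $u_{ts}$ with $\Gamma^{*}u_{ts}=\rho_{ts}$, where by Assumption \ref{ass2} each of $B_{t}^{*}$ and $B_{s}^{*}$ contributes exactly the two non-intersection patches indexing $\rho_{ts}$, and $\Gamma^{*}$ is the $4\times 3$ matrix of WARP-consistent pairwise types (the single pattern $(0,1,1,0)^{'}$ being the lone WARP violation, as recorded in the Consumers-types example).

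For uniqueness, I would check that the three columns $c_{1}=(1,0,1,0)^{'}$, $c_{2}=(0,1,0,1)^{'}$, $c_{3}=(1,0,0,1)^{'}$ of $\Gamma^{*}$ are linearly independent. Writing $\alpha c_{1}+\beta c_{2}+\gamma c_{3}=0$, the second and third coordinates give $\beta=0$ and $\alpha=0$, and the first coordinate then forces $\gamma=0$. Hence $\Gamma^{*}$ has full column rank, the linear map $u\mapsto\Gamma^{*}u$ is injective, and the equation $\Gamma^{*}u_{ts}=\rho_{ts}$ has at most one solution in $\mathbb{R}^{3}$, and in particular at most one probability-vector solution. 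Combined with the existence step, this delivers the unique $u_{ts}$ claimed in the statement.

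The main obstacle I anticipate is the bookkeeping in the existence step, namely verifying that for a single pair of crossing budgets under Assumption \ref{ass2} the pairwise rational-type matrix supplied by Theorem \ref{theo1} really is the displayed $\Gamma^{*}$: that each budget splits into exactly the two patches $x_{0\mid\cdot},x_{1\mid\cdot}$, that the intersection patch carries no mass and may be dropped, and that precisely one of the four patch-combinations is excluded by the two-budget WARP test. Once this identification is in place, uniqueness is immediate from the full column rank of $\Gamma^{*}$, which is the one-line linear-algebra verification above; no further structure of the preference functions in $\mathcal{R}$ is needed beyond what Theorem \ref{theo1} already encodes.
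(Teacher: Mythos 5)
Your proposal follows essentially the same route as the paper: existence of $u_{ts}$ comes from the fact that RPM-rationalizability passes to the two-budget subproblem, and uniqueness comes from the rank of the $4\times 3$ pairwise matrix $\Gamma^{*}$. If anything, your uniqueness step is cleaner than the paper's: the paper asserts that $\Gamma^{*}$ is ``full rank, so it is invertible'' and writes $u_{ts}=\Gamma^{*^{-1}}A_{t\mid s}(\Gamma\eta)$, which is loose since $\Gamma^{*}$ is not square; your observation that the three columns $(1,0,1,0)'$, $(0,1,0,1)'$, $(1,0,0,1)'$ are linearly independent, so that $u\mapsto\Gamma^{*}u$ is injective and the system has at most one solution, is the correct way to make that step rigorous. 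Your existence step (restrict the rationalizing measure $\mu$ to periods $\{t,s\}$ and invoke the $(i)\Rightarrow(ii)$ direction of Theorem \ref{theo1}) is also a legitimate substitute for the paper's bookkeeping with the selection matrix $A_{t\mid s}$, and it makes transparent why $\rho_{ts}$ lies in the cone of $\Gamma^{*}$, something the paper's formula alone does not establish.

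The one thing you omit is the paper's Case 2: the corollary quantifies over all pairs $t,s\in\mathbb{T}$, including pairs whose budget hyperplanes do not intersect. In that situation each budget consists of a single patch, the four-patch vector $\rho_{ts}=(\rho_{0\mid t},\rho_{1\mid t},\rho_{0\mid s},\rho_{1\mid s})'$ degenerates, and the relevant rational-type matrix is $(1\;1)'$ rather than the displayed $4\times3$ matrix $\Gamma^{*}$; the paper treats this separately and the conclusion is trivial there. Your argument, as written, silently assumes the two budgets cross (``each of $B_{t}^{*}$ and $B_{s}^{*}$ contributes exactly the two non-intersection patches''). This is a small and easily repaired omission rather than a flaw in the main argument, but a complete proof should either dispose of the non-crossing case explicitly or note that the statement is only meaningful for crossing budgets.
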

\begin{proof}
See Appendix.
\end{proof}

\begin{remark}
The inverse of Corollary \ref{cor1} is not true if I consider the following counterexample: three goods and three budgets.\footnote{I thank Victor Aguiar for providing me with this counterexample.} The distribution of demand and price are defined as follows:
$$\rho=\left(0, 1/2, 1/2, 0, 1/2, 0, 0, 1/2, 0, 1/2, 1/2, 0\right)^{'}$$
$P=(p_{1}, p_{2}, p_{3}),~~\text{where}~~p_{1}=\left(1/2, 1/4, 1/4\right),~~p_{2}=\left(1/4, 1/2, 1/4\right),~~\text{and}~~p_{3}=\left(1/4, 1/4, 1/2\right)$.
The matrix of WARP-consistent types $\Gamma$ has $12$ rows and $27$ columns. This rational matrix has the same number of rows but more columns than the rational matrix in KS that has $12$ rows and $25$ columns and is defined in the example $3.2$. I also consider the same set of prices like in the example $3.2$ of KS.\vspace{0.25cm}
I define the budget hyperplanes as follows: $B(p_{t})=\left\lbrace y_{t}=(y^{1}_{t}, y^{2}_{t}, y^{3}_{t}) \in \mathbb{R}^{3}_{+}: p_{t}^{'}y_{t}=1\right\rbrace,~t=1, 2, 3.$
I now prove that the rationalization by the RPM holds for all pairs of observations. The following inequalities between the choice probabilities result from the condition: $\Gamma \nu=\rho$.
\begin{itemize}
\item $B_{1}$ and $B_{2}$: $\rho_{0\mid 1}+\rho_{2 \mid 1} \geq  \rho_{3 \mid 2}$ and $\rho_{3 \mid 1} \leq \rho_{0 \mid 2}+\rho_{2 \mid 2}$
\item $B_{1}$ and $B_{3}$: $\rho_{2 \mid 3}+\rho_{3 \mid 3} \leq \rho_{0 \mid 1}+\rho_{1 \mid 1}$ and $\rho_{0 \mid 3} + \rho_{1 \mid 3} \geq \rho_{3 \mid 1}$
\item $B_{2}$ and $B_{3}$: $\rho_{2 \mid 3}+\rho_{0 \mid 3} \geq \rho_{3 \mid 2}$ and $\rho_{3 \mid 3} \leq \rho_{0 \mid 2}+\rho_{1 \mid 2}$
\end{itemize} \vspace{0.25cm}
From this condition, the following occurs:
$$\rho_{12}=
\left( {\begin{array}{cccc}
\rho_{0 \mid 1}+ \rho_{2 \mid 1}\\
\rho_{3 \mid 1}\\
\rho_{3 \mid 2} \\
\rho_{0 \mid 2}+ \rho_{2 \mid 2}
\end{array} }\right)~~\rho_{13}=
\left({\begin{array}{cccc}
\rho_{0 \mid 1}+ \rho_{1 \mid 1}\\
\rho_{3 \mid 1}\\
\rho_{0 \mid 3}+ \rho_{1 \mid 3} \\
\rho_{2 \mid 3}+ \rho_{3 \mid 3}
\end{array} }\right)~~\rho_{23}=
\left( {\begin{array}{cccc}
\rho_{0 \mid 2}+ \rho_{1 \mid 2}\\
\rho_{3 \mid 2}\\
\rho_{3 \mid 3} \\
\rho_{0 \mid 3}+ \rho_{2 \mid 3}\\
\end{array} }\right)$$
\vspace{0.5cm}
So, there are some vectors of probabilities $u_{1\mid 2}$, $u_{1\mid 3}$ and $u_{2\mid 3}$ such that:
$$\Gamma^{*} u_{1 2}=\rho_{1 2},~~\Gamma^{*} u_{13}=\rho_{13},~~\Gamma^{*} u_{2 3}=\rho_{23}$$

I now claim that the test for stochastic rationality does not apply if the vector of choice probabilities is $\rho=\left(0, 1/2, 1/2, 0, 1/2, 0, 0, 1/2, 0, 1/2, 1/2, 0\right)^{'}$. In fact, there is no positive vector $\nu \in \left[0, 1\right]^{27}$, so that $\Gamma \nu=\rho$. I obtain a computational certificate of infeasibility. See Appendix \ref{matlab}.\vspace{0.25cm}
This result shows that rationalization by the RPM in all pairs of observations is necessary but not sufficient. In other words, verifying the WARP in each pair of budget hyperplanes is necessary, but more is needed for this approach to be equivalent to verifying the WARP for all consumer types when considering all observations together.
\end{remark}

Given all the results that have been derived so far, under what conditions would the RPM and the RUM be equivalent when there are more than two goods? Recent work by \cite{cherchye2018transitivity} provides the necessary and sufficient conditions for price variation under which the WARP is equivalent to the GARP and the SARP. They refer to this condition as a \textit{triangular configuration}. It turns out that the RPM and the RUM can be equivalent if the set of normalized price vectors is a\textit{triangular configuration}.\vspace{0.25cm}

\begin{definition}\cite[][]{cherchye2018transitivity}
A set of normalized price vectors $P=\left\lbrace \tilde{p}_{t} \right\rbrace_{t \in \mathbb{T}}$, where $\tilde{p}_{t}=p_{t}/p_{t}y_{t}$,
is a \textit{triangular configuration} if for any three normalized price vectors $\tilde{p}_{t}$, $\tilde{p}_{s}$, $\tilde{p}_{k}$ in $P$, there are a number of $\lambda \in \left[0, 1 \right]$ and a permutation 
$\sigma: \left\lbrace t, s, k\right\rbrace \to \left\lbrace t, s, k\right\rbrace $ such that the following condition holds:
$$\tilde{p}_{\sigma(t)} \leq \lambda \tilde{p}_{\sigma(s)} + (1-\lambda) \tilde{p}_{\sigma(k)}~~\text{or}~~\tilde{p}_{\sigma(t)} \geq \lambda \tilde{p}_{\sigma(s)} + (1-\lambda) \tilde{p}_{\sigma(k)}$$
\end{definition}

Therefore, verifying whether a set of price vectors is a \textit{triangular configuration} simply requires checking the linear inequalities above for any possible combination of three price vectors. Under this condition, dropping the transitivity condition makes the WARP and the SARP indistinguishable from each other. \vspace{0.25cm}
The following theorem shows the equivalence between the RUM and the RPM.

\begin{proposition}{\label{prop2}}
  Given $L \geq 2$, if the data $S= \left\lbrace p_{t}, \rho_{t} \right\rbrace_{t \in \mathbb{T}}$ satisfy the assumptions \ref{ass1} and \ref{ass2} and the set of normalized price vectors $P=\left\lbrace \tilde{p}_{t} \right\rbrace_{t \in \mathbb{T}}$ is a \textit{triangular configuration}, the following are equivalent: 
\begin{enumerate}
\item[(i)]$S$ is rationalized by the RPM. \label{theo12}
\item[(ii)] $S$ is rationalized by the RUM.\label{theo22}
\end{enumerate}  
\end{proposition}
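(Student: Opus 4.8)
The plan is to prove the two implications separately, and to observe at the outset that the triangular-configuration hypothesis is needed for only one of them. The implication $(ii) \Rightarrow (i)$ is immediate: Proposition \ref{prop1} already shows, with no restriction whatsoever on the price configuration, that RUM-rationalization implies RPM-rationalization. Hence the entire content of the proposition lies in the converse $(i) \Rightarrow (ii)$, and the strategy is to reduce this to a comparison of the two linear-algebraic rationalization criteria and to show that the triangular configuration forces them to coincide.

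For $(i) \Rightarrow (ii)$, I would invoke the two characterizations. By Theorem \ref{theo1}, $S$ is RPM-rationalized if and only if there is a probability vector $\nu \in \Delta^{H-1}$ with $\Gamma \nu = \rho$, where the columns of $\Gamma$ enumerate the WARP-consistent discretized types. By the characterization of \cite{kitamura2018nonparametric} (their Theorem 3.1), $S$ is RUM-rationalized if and only if there is a probability vector $\tilde\nu$ with $A \tilde\nu = \rho$, where the columns of $A$ enumerate the SARP-consistent (equivalently, utility-rationalizable) discretized types. Since the target vector $\rho$ is identical in both feasibility problems, it suffices to show that $\Gamma$ and $A$ have the same set of columns, i.e.\ that a discretized type is WARP-consistent if and only if it is SARP-consistent. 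One inclusion is free: SARP adds transitivity to WARP, so every SARP-consistent type is WARP-consistent, and the columns of $A$ are always among those of $\Gamma$ (this is the type-level shadow of Proposition \ref{prop1}).

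The reverse inclusion is exactly what the triangular configuration supplies. First I would fix the normalization $w_t = 1$ built into the discretization, so that $B_t = \{y : p_t' y \leq 1\}$; by strict monotonicity every type's chosen bundle exhausts the budget and lies on $p_t' y = 1$, whence $p_t y_t = 1$ and the normalized price $\tilde p_t = p_t/(p_t y_t)$ collapses to $p_t$ itself, \emph{independently of the type}. Thus the triangular-configuration condition is a property of the fixed price vectors alone and applies uniformly across all types. For any fixed WARP-consistent type $(y_1(\theta), \dots, y_T(\theta))$, the result of \cite{cherchye2018transitivity} then guarantees that under the triangular configuration WARP is equivalent to SARP, so this type is also SARP-consistent. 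Applying this to every column shows that $\Gamma$ and $A$ enumerate the same types, so the feasibility of $\Gamma \nu = \rho$ over the simplex coincides with the feasibility of $A \tilde\nu = \rho$, yielding $(i) \Leftrightarrow (ii)$.

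The main obstacle is transferring the bundle-level equivalence of WARP and SARP to the patch-level (discretized) types that actually index the columns of $\Gamma$ and $A$. This requires verifying that the directly-revealed-preference relations read off from the selected patches coincide with the relations among their representative bundles, so that a type's WARP- and SARP-status are correctly determined by its patch profile; here Assumption \ref{ass2}, which removes all mass from the intersection patches, is what eliminates the boundary ties that could otherwise make the patch-level and bundle-level relations disagree. Once the patch-level relations are shown to match the bundle-level ones, the \cite{cherchye2018transitivity} equivalence applies verbatim to each type and the coincidence of $\Gamma$ and $A$—and hence the equivalence of the two rationalizations—follows.
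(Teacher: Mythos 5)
Your proposal is correct and follows essentially the same route as the paper: the $(ii)\Rightarrow(i)$ direction is delegated to Proposition \ref{prop1}, and the converse is obtained by applying the \cite{cherchye2018transitivity} equivalence of WARP and SARP under a \textit{triangular configuration} to each individual rational type. The only cosmetic difference is that you phrase this type-by-type step through the matrix characterization of Theorem \ref{theo1} (showing the columns of the WARP- and SARP-consistent matrices coincide), whereas the paper applies it directly to the almost-sure realizations $\left\lbrace p_{t}, y_{t}(\theta)\right\rbrace_{t \in \mathbb{T}}$ in the measure-theoretic definition of rationalization.
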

\begin{proof}
    See Appendix
\end{proof}
 This finding expands on the results of \cite{cherchye2018transitivity} regarding the equivalence between the WARP and the SARP in the stochastic environment and answers the question of when a rationalization by the RUM is empirically equivalent to a rationalization by the RPM when the number of goods exceeds two. Thus, dropping the transitivity condition makes the RUM and the RPM empirically equivalent under the \textit{triangular configuration}.
 
\subsection{Econometric Testing}
In this subsection, I discuss the statistical testing procedure. The second statement of Theorem \ref{theo1} simplifies the problem of testing RPMs to testing a null hypothesis concerning a finite vector of probabilities. Given $\left\lbrace a_{1}, \ldots, a_{H}\right\rbrace$ for the column vectors of $\Gamma$, the set $$\mathcal{C}=cone (\Gamma)=\left\lbrace \Gamma \nu=a_{1}\nu_{1}+ \ldots+ a_{H} \nu_{H}: \nu_{l} \geq 0 ~~\text{for all}~~l \in \left\lbrace 1, \ldots, H \right\rbrace \right\rbrace $$
is a finitely generated cone. Since the matrix of WARP-consistent types has more columns than the matrix of SARP-consistent types used in KS and DKSQ, the cone under the WARP is generated by more vectors than that under the SARP.\vspace{0.20cm} 
For Theorem \ref{theo1}, I want to test the following hypothesis: $H_{0}$: There are  $\nu \in \Delta ^{H-1}$ such that $ \Gamma \nu=\rho$. This null hypothesis is equivalent to
$$\min_{\nu \in \mathbb{R}^{H}_{+}}\parallel \rho-\Gamma \nu \parallel^{2}=0$$
 The solution $\eta_{0}$ of $H_{0}$ is the projection of $\rho \in \mathbb{R}^{I}_{+}$ onto the cone $\mathcal{C}$ under the weighted norm $\parallel y\parallel=\sqrt{y^{t} y}$. \footnote{$\rho \in \mathbb{R}^{d_{\rho}}_{+}$ with $d_{\rho}$ being the number of patches of the rational demand matrix $\Gamma$.} The objective function's corresponding value is the projection residual's square. Note that $\eta_{0}$ is unique, but the corresponding $\nu$ is not; stochastic rationality holds when the length of the residual is zero. An inherent sample representation of the objective function in $H_{0}$ would be 
$\min_{\eta \in  \mathcal{C}}\parallel \hat{\rho}-\eta \parallel^{2}$, where $\hat{\rho}$ estimates $\rho$, for example, by sample choice frequencies. Thus, as in KS, the scaling yields:
\begin{align*}\label{test}
\mathcal{J}_{N} &= N\min_{\eta \in \mathcal{C}}\parallel \hat{\rho}-\eta \parallel^{2}
\end{align*}
where $N=\min_{t}N_{t}$ with $N_{t}$ being the number of observations on budget $B_{t}$. Note that $\nu$ is not unique at the optimum, while $\eta$ is. The null hypothesis occurs when a population of rational consumers generates a sample of cross-sectional demand distributions. In other words, the estimated distributions of the preference functions may have arisen from an RPM up to sampling uncertainty. I use KS's modified bootstrap method to calculate the critical values for this test (See it described here \ref{stat}).\vspace{0.25cm}

\section{Empirical Application}
To illustrate the proposed methodology, I use a subsample data set from the UK Family Expenditure Survey (FES). The FES data are a repeated cross-sectional survey comprising around $7,000$ households who report their consumption expenditures for various commodity groups each year. These data are frequently used for the nonparametric demand estimations (See, for instance, \citealp{blundell2008best}; \citealp{kitamura2018nonparametric}; \citealp{deb2023revealed}). I analyze the coarsest partition of three and five goods and maintain the same categorization as in KS. I consider food, services, and nondurable goods for a 3-dimensional commodity space. For a 5-dimensional commodity space, I consider food, services, nondurable goods, clothing, and alcohol. My analysis focuses on consumers meeting specific criteria, including owning at least one car, having one child, having an annual nonzero income, and spending on the nonzero goods under consideration. The number of data points varies from $2087$ (in 1975) to $1833$ (in $1999$) for a total of $46,391$. I have a larger data set than KS because I consider both couples and singles households.\footnote{The number of data points used in KS varies from $715$ (in $1997$) to $1509$ (in $1975$) for a total of $26,341$. My choice for couples and singles households is motivated by \cite{aguiar2021stochastic} and \cite{browning2010uncertainty}, who found that couples and singles households are more likely to inconsistently behave with the dynamic utility maximization theory provided some values of exponential discounting.} \vspace{0.20cm}

Before testing the RUM and the RPM on the previous subsample of the FES data set, similar to \citealp{blundell2008best}, \cite{kitamura2018nonparametric}, and \cite{deb2023revealed} I assume that consumers are subjected to identical prices within a year, and I use similar price data. I represent each consumer's income using total expenditures on the goods under consideration and the income per period by the median expenditure of all consumers in that period. This is because budgets overlap considerably at the median expenditure level. Since the demand distributions are estimated for a fixed (mean) level of total expenditure, I also use an instrumental variable procedure with a control function approach to adjust for the dependence of total expenditure on prices.\footnote{The test proposed by \cite{mcfadden1990stochastic} requires the observation of large samples of consumers who face the exact prices and have identical total expenditures on the observed goods. However, this property does not hold where consumer demand (and thus total expenditures) for the observed goods is usually price-dependent.} Therefore, following the work of KS and \cite{blundell2008best}, I also use the annual household income as an instrumental variable. \footnote{I also use KS's replication file in which the procedure for nonparametric estimation of instrumental variables is well described. I further modify the Floyd-Warshall algorithm to detect only cycles of length two.}\vspace{0.25cm}

Tables \ref{tab:mytable1} and \ref{tab:mytable2} hold summaries of the empirical results for both the RPM and the RUM. In Table \ref{tab:mytable1}, I present the results for the three composite goods and the blocks of seven consecutive periods. In Table \ref{tab:mytable2}, I increase the dimensionality of the commodity space to five commodities, but I reduce the number of consecutive periods to six. Both tables display test statistics ($\mathcal{J}_{N}$), p-values ($\textit{p}$), the number of patches $\textit{I}$, and the number of demand vectors $\textit{H}$ that can be rationalized. The rational demand matrix has the size ($\textit{I} \times \textit{H}$). The findings are consistent with the theory as the rational matrix under the WARP has the same number of rows but more columns than the rational matrix under the SARP. \vspace{0.25cm}

Overall, in Tables \ref{tab:mytable1} and \ref{tab:mytable2}, the estimated choice probabilities are consistent with the RPM, and the non-rejection of the null hypothesis ${H}_{0}$ is also statistically significant. These results indicate strong evidence for the null hypothesis. However, the estimated choice probabilities are not always consistent with the RUM. This is true in entries 86--92, 87--93, and 88--94 of Table \ref{tab:mytable1}. The p-value of each entry is less than $0.05$, indicating evidence against the null hypothesis. Since I ran $19$ test statistics to obtain these results, I further adjusted all p-values with the Bonferroni correction. Specifically, I multiplied each p-value by the number of test statistics. All adjusted p-values that exceed one were then reduced to one. Thus, for entries 88--94 of Table \ref{tab:mytable1}, the estimated choice probabilities were still inconsistent with the RUM, and the rejection of the null hypothesis ${H}_{0}$ was statistically significant at the $10\%$ level.

The same observation holds for entries 87--92 and 88--93 in Table \ref{tab:mytable2} for the RUM in which the p-value of each entry is less than $0.05$, indicating strong evidence against the null hypothesis. Applying the Bonferroni correction once again to the p-value $0.002$ generates an adjusted p-value of $0.04$.Thus, the estimated choice probabilities in the entries 88--93 are still inconsistent with the RUM, and the rejection of the null hypothesis is also statistically significant at the $5\%$ level. \vspace{0.25cm}

Overall, my results support the hypothesis that consumers behave in accordance with the RPM. In contrast, the RUM cannot describe the behavior of the population. My results are also consistent with Proposition $1$ because RPMs explain the data when RUMs do. The fact that the RUM could not explain household behaviors while the RPM could confirm that households can sometimes exhibit intransitive behaviors.\vspace{0.25cm}

It follows from these results that RPMs are not only weaker than RUMs but also empirically more successful. To the best of my knowledge, this finding is the first empirical evidence of the superiority of the WARSP over ARSP when the dimension of the commodity space is greater than two. Before closing this section, it is important to mention that this paper is one of many that provide empirical evidence against the RUM. For example, \cite{aguiar2018does} found that the RUM cannot explain population behavior when consideration of all alternatives is costly.

\begin{table}
  \centering
  \begin{tabular}{ccccccc}
    \hline
    \hline
    & \multicolumn{3}{c}{\textbf{RPM}} & \multicolumn{3}{c}{\textbf{RUM}} \\
    \cmidrule(lr){2-4}
    \cmidrule(lr){5-7} 
    Periods &$\Gamma$&$\mathcal{J}_{N}$ & $\textit{p}$ &$\Gamma$& $\mathcal{J}_{N}$ & $\textit{p}$ \\
   \hline \\
   $75-81$ & $33 \times 39859$ & $0.0033$ & $0.661$&$33 \times 3666$&$0.646$& $0.289$\\
$76-82$ & $35 \times 32061$ & $0.0043$ & $0.759$ &$35 \times 3620$&$0.8582$&$0.396$ \\
$77-83$ & $39 \times 101610$ & $0.0019$ & $0.530$&$39 \times 8410$&$3.3263$& $0.138$ \\
$78-84$ & $39 \times 122479$ & $0.002$ & $0.368$&$39 \times 7479$& $5.5957$&$0.163$ \\
$79-85$ & $40 \times 122589$ & $0.003$ & $0.613$ &$40 \times 10486$&$8.5847$& $0.095$\\
$80-86$ & $34 \times 50016$ & $0.0029$ & $0.301$ &$34 \times 4931$& $10$&$0.116$\\
$81-87$ & $20 \times 540$ & $0.0411$ & $0.059$&$20 \times 175$& $5.9795$&$0.17$ \\
$82-88$ & $13 \times 36$ & $\num{5.57e-04}$ & $0.255$&$13 \times 21$&$2.2138$& $0.193$ \\
$83-89$ & $11 \times 12$ & $\num{5.61e-04}$ & $0.216$&$11 \times 7$&$2.1799$& $0.198$ \\
$84-90$ & $11 \times 10$ & $2.1509$ & $0.195$&$11 \times 7$&$2.1509$ & $0.195$ \\
$85-91$ & $11 \times 12$ & $0$ & $1$ &$11 \times 7$&$0.5163$& $0.255$\\
$86-92$ & $24 \times 1680$ & $0.0013$ & $0.297$&$24 \times 333$&$10.9469$& $0.011$ \\
$87-93$ & $39 \times 91625$ & $0.0012$ & $0.348$&$39 \times 6780$& $9.1404$& $0.007$\\
$88-94$ & $45 \times 341579$ & $0.0011$ & $0.35$&$45 \times 17446$&$10.7697$&  $0.005$\\
$89-95$ & $37 \times 33912$ & $0.0018$ & $0.05$&$37\times 2890$&$0.0018$& $0.072$ \\
$90-96$ & $29 \times 7472$ & $0$ & $1$ &$29\times 917$&$0$& $1$\\
$91-97$ & $25 \times 2112$ & $0$ & $1$&$25 \times 475$& $0.0105$& $0.273$\\
$92-98$ & $19 \times 288$ & $0$ & $1$&$19 \times 108$&$0.006$ & $0.547$ \\
$93-99$ & $13 \times 32$ & $0$ & $1$&$13 \times 15$&$0.0067$& $0.284$ \\
  \hline
    \hline
  \end{tabular}
  \caption{Testing Results: Three goods and blocks of seven consecutive periods}
  \label{tab:mytable1}
\footnotesize{I = number of patches; H = number of rationalizable discrete demand vectors; $\Gamma=I \times H$; $\mathcal{J}_{N}=\text{test statistic}$; p = p-value. Number of bootstrap replications = 1, 000.}
\end{table}
\begin{table}
  \centering
  \begin{tabular}{ccccccc}
    \hline
     \hline
    & \multicolumn{3}{c}{\textbf{RPM}} & \multicolumn{3}{c}{\textbf{RUM}} \\
    \cmidrule(lr){2-4}
    \cmidrule(lr){5-7}
   Periods &$\Gamma$&$\mathcal{J}_{N}$ & $\textit{p}$ &$\Gamma$& $\mathcal{J}_{N}$ & $\textit{p}$ \\
   \hline\\
   $75-80$ & $33 \times 15920$ & $0.001$ & $0.323$&$33 \times 1907$&$0.0092$& $0.495$\\
 $76-81$ & $33 \times 12969$ & $0.0073$ & $0.568$&$33 \times 1817$&$1.242$& $0.113$\\
   $77-82$ & $43 \times 85848$ & $\num{7.82e-04}$ & $0.427$&$43 \times 7380$&$1.0446$& $0.188$\\
      $78-83$ & $50 \times 252739$ & $\num{2.81e-04}$ & $0.409$&$50 \times 10730$&$1.1463$& $0.1920$\\
      $79-84$ & $49 \times 147867$ & $\num{3.26e-04}$ & $0.481$&$49 \times 9161$&$0.0044$& $0.822$\\
          $80-85$ & $41 \times 66428$ & $0.0012$ & $0.18$&$41 \times 5033$&$0.0031$& $0.752$\\
          $81-86$ & $39 \times 47603$ & $\num{4.66e-04}$ & $0.2$&$39 \times 4176$&$0.0135$& $0.62$\\
     $82-87$ & $20 \times 576$ & $0$ & $1$&$20 \times 145$&$0.0127$& $0.602$\\
       $83-88$ & $12 \times 27$ & $0$ & $1$&$12 \times 11$&$0.0141$& $0.562$\\
       $84-89$ & $12 \times 18$ & $0$ & $1$&$12 \times 11$&$0.014$& $0.563$\\
        $85-90$ & $8 \times 3$ & $0$ & $1$&$8 \times 3$&$0$& $1$\\
         $86-91$ & $12 \times 32$ & $0$ & $1$&$12\times 15$&$0.9592$& $0.184$\\
          $87-92$ & $25\times 2991$ & $0$ & $1$&$25 \times 435$&$11.8048$& $0.006$\\
      $88-93$ & $52\times 263485$ & $\num{2.9e-05}$ & $0.842$&$52 \times 12486$&$11.613$& $0.002$\\
   $89-94$ & $60 \times 635742$ & $\num{2.8e-05}$ & $0.888$&$60 \times 26982$&$0.4917$& $0.309$\\
    $90-95$ & $45 \times 91902$ & $0$ & $1$&$45 \times 3818$&$0.0324$& $0.466$\\
      $91-96$ & $53 \times 165334$ & $0.0232$ & $0.395$&$53 \times 15131$&$0.0234$& $0.481$\\
       $92-97$ & $24\times 854$ & $0$ & $1$&$24 \times 269$&$0$& $1$\\
        $93-98$ & $14\times 96$ & $0$ & $1$&$14 \times 37$&$0.005$& $0.432$\\
         $94-99$ & $12 \times 36 $ & $0$ & $1$&$12 \times 17$&$0.0056$& $0.456$\\
   \hline
    \hline
  \end{tabular}
  \caption{Testing Results: Five goods and blocks of six consecutive periods}
  \label{tab:mytable2}
\footnotesize{I = number of patches; H = number of rationalizable discrete demand vectors; $\Gamma=I \times H$; $\mathcal{J}_{N}=\text{test statistic}$; p = p-value. Number of bootstrap replications = 1, 000.}
\end{table}

\subsection{Monte Carlo Simulations}
The previous empirical application shows that the RPM is empirically more successful than the RUM if the dimension of the commodity space is at least three. Unlike the RUM, the RPM is not consistently rejected by the UK data that can be attributed to various factors. Most notably, the RPM is less restrictive than the RUM. To confirm that my testing procedure has power against the RPM, I assess the performance of my model using Monte Carlo simulations. In particular, I show that my test can reject the false null hypothesis of data being consistent with the RPM in finite samples comparable to the sample size of the population of consumers considered in my empirical application. \vspace{0.25cm}

I run the simulations using the $12$ by $27$ matrix of WARP-consistent types of $\Gamma$ (referenced as \ref{matrix}) that I  used in the previous counterexample in Section $4$. This matrix differs from the one used in KS as it has two additional columns.\footnote{\cite{kitamura2018nonparametric}'s $12$ by $25$ rational matrix is a matrix of SARP-consistent types on which they tested the WARP. In their example $3.1$, consumer types are consistent with the WARP but not with the SARP.} I also consider the matrix $X$ used in KS's Monte Carlo simulations. Data on choice probabilities are generated using $\rho_{true}$, where $\rho_{true}$ is the weighted sum of $\rho_{inside}$ and $\rho_{outside}$, calculated as:
$$\rho_{true}=0.95 \rho_{inside}+ 0.05 \rho_{outside}$$

$\rho_{\text{inside}}$ and $\rho_{\text{outside}}$ are defined such that $\rho_{\text{inside}}$ strongly satisfies the null hypothesis, while $\rho_{\text{outside}}$ does not satisfy the null hypothesis. I use and modify KS's replication file for the Monte Carlo simulations. I consider four sample sizes---$N_{t} \in \left\lbrace 100, 200, 500, 1,000, 2,500\right\rbrace$---and perform $499$ simulations with a bootstrap size of $1,000$.  Table \ref{tab:mytable3} presents the results for the power simulations. It displays the proportion of rejected simulations at the $5\%$ and $1\%$ significance levels. As expected, the fraction of rejections is bigger for larger samples. For $N_{t} \in \left\lbrace 200, 500, 1,000, 2,500 \right\rbrace$, the rejection rate is $100\%$. Thus, as the population increases, so should the probability of rejecting the null that the populations’ behavior is generated by the RPM. These results show that the RPM can be rejected with a power of one with a comparable sample size as in my empirical application.
\begin{table}
  \centering
  \begin{tabular}{cccc}
    \hline
     \hline
  Sample size  & $\alpha=5\%$ & $\alpha=1\%$\\
    \hline\\
     $N=100$ & $0.99$ & $0.952$ \\
   $N=200$ & $1$ & $1$ \\
    $N=500$ & $1$ & $1$  \\
     $N=1,000$ & $1$ & $1$  \\
     $N=2,500$ & $1$ & $1$  \\
      \hline
     \hline 
  \end{tabular}
  \caption{Proportion of rejected simulations. Number of bootstrap replications = 1,000. Number of simulations = 499}
  \label{tab:mytable3}
\end{table}

\section{Welfare Comparisons and Counterfactuals in RPMs}
The natural next steps after the rationality check are the extrapolation of the (bounds on) counterfactual demand distributions and the welfare analysis. Consider repeated cross-sectional data that are consistent with the RPM but inconsistent with the RUM. In this situation, the interpretation of the classical stochastic revealed-preference relation is unclear, yet I show in this section that meaningful welfare and counterfactual analyses are possible. The main results are an analogy of the results of DKSQ and \cite{kitamura2019nonparametric} for the RAUM (welfare comparisons) and the RUM (nonparametric counterfactuals), respectively. \vspace{0.20cm}

\subsection{Welfare Comparisons}
With the help of the rationalization test in which a distribution across different consumer types is determined, I can infer the preference distribution in the population. Such insights are valuable for the welfare analysis, especially when considering whether the substitution or addition of certain products affects consumer welfare. Suppose a government authorizes the sale of a particular good on the market that was previously prohibited, such as cannabis (marijuana). What matters for the government's prospects of being re-elected is the proportion of consumers who are better off as a result of this change in their choice set. I develop a method below to obtain information about this proportion. \vspace{0.25cm}

The next definition introduces the patch-revealed dominance criterion. It is a binary relation between the elements of $X^{*}=\cup_{i \in \mathcal{I}_{t}, t \in \mathbb{T}} \left\lbrace x_{i \mid t} \right\rbrace$ that are complete but not necessarily transitive. With this criterion, I can compare any pair of patches belonging to different hyperplane budgets.
 \begin{definition}[Patch-Revealed Dominance]
    I say that patch $x_{i \mid t}$ is dominant over $x_{i^{'} \mid s}$ and is denoted by $x_{i \mid t} > x_{i^{'} \mid s}$, if for all $y \in x_{i \mid t}$, $p_{t}y_{t} \geq p_{t}z_{s}$ and $p_{s}z_{s}<p_{s}y_{t}$ for all $z_{s} \in x_{i^{'} \mid s}$.
\end{definition}
Since patches are regarded as alternatives, $x_{i \mid t} > x_{i^{'} \mid s}$ means that the best nonstochastic choice of $x_{i \mid t}$ is directly revealed to be preferred over the best nonstochastic choice in $x_{i^{'} \mid s}$.\footnote{As all nonstochastic demand systems belonging to the same patch lead to the same directly revealed preference result, the patch-revealed dominance criterion means that all bundles in the dominant patch are directly revealed to be preferred to those in the dominated patch.}  Figure \ref{fig1} illustrates that $x_{0 \mid 2}$ is dominant over $x_{0 \mid 1}$  and $x_{1 \mid 1}$ is dominant over $x_{1 \mid 2}$. \vspace{0.20cm}

\begin{definition}
Given a rational type $\theta$, the budget $B_{t}^{*}=\left\lbrace x_{i \mid t}\right\rbrace_{i \in \mathcal{I}_{t}}$ is revealed to be preferred to budget $B_{s}^{*}=\left\lbrace x_{i \mid s}\right\rbrace_{i \in \mathcal{I}_{s}}$ that is denoted as $B_{t}^{*} >^{\theta} B_{s}^{*}$, if there are $l \in \mathcal{I}_{t}$ and $k \in \mathcal{I}_{s}$ such that $x_{l \mid t}$ is revealed to be dominant over $ x_{k \mid s}$, $x_{l \mid t} >^{\theta}  x_{k \mid s}$.
\end{definition}

According to this definition, since the revealed preference compares subsets of hyperplane budgets, if budgets $B_{t}^{*}$ and $B_{s}^{*}$ overlap; then budget $B_{t}^{*}$ can be revealed to be preferred to the budget $B_{s}^{*}$ and vice versa.\footnote{The notation $B_{t}^{*} >^{\theta} B_{s}^{*}$ also means that the consumer is better off at $B_{t}^{*}$ than at $B_{s}^{*}$. If the two hyperplane budgets $B_{t}^{*}$ and $B_{s}^{*}$ do not overlap, I can still apply the Path-Revealed Dominance criterion. I do not consider intersection patches in this analysis because the revealed dominance criterion may no longer be consistent.} To determine the welfare effect of a change in the choice set or menu from $B_{s}^{*}$ to $B_{t}^{*}$, let $ \mathbbm{1}_{B_{t}^{*} >^{\theta} B_{s}^{*}}$ denote the row vector with its length equal to the number of rational types such that the t-th element is one if, for the rational type corresponding to column $t$ of $\Gamma$, $B_{t}^{*}$ is revealed to be preferred to $B_{s}^{*}$ and zero otherwise. Therefore, $ \mathbbm{1}_{B_{t}^{*} >^{\theta} B_{s}^{*}}$ enumerates the set of rational types for which $B_{t}^{*}$ is revealed to be preferred to $B_{s}^{*}$.\vspace{0.20cm}

\begin{definition}
Given a rational type $\theta$, if $\mu$ is the true distribution rationalizing data $\left\lbrace p_{t}, then y_{t}(\theta) \right\rbrace_{t \in \mathbb{T}}$ is the proportion of consumers who are revealed to be better off under budget set $B_{t}^{*}$ than under budget set $B_{s}^{*}$. This proportion comes from $\mu \left(\left\lbrace  \theta \in \Theta: (l, k) \in \mathcal{I}_{t} \times \mathcal{I}_{s} ~\text{such that}~x_{l \mid t} >^{\theta}  x_{k \mid s} \right\rbrace \right)$.
\end{definition}
If the true distribution $\mu$ is known, this definition can be used to identify the proportion of consumers who prefer the choice set $B_{t}^{*}$ to the choice set $B_{s}^{*}$. I can partially identify this proportion since the true distribution $\mu$ is not always observed. The following proposition allows me to achieve this identification. 

\begin{proposition}[Welfare Comparison]{\label{prop3}}
Given $L \geq 2$, let $S=\left\lbrace \rho_{t}, p_{t} \right\rbrace_{t \in \mathbb{T}}$ be a repeated cross-section of data rationalized by the RPM and satisfying the assumptions \ref{ass1} and \ref{ass2}.
\begin{itemize}
 \item [(i)] Then, for every $\lambda \in \left[\underline{\gamma}, \bar{\gamma} \right] $, there is a rationalization of $S$ for which $\lambda$ is the proportion of consumers who are revealed better off under budget set $B_{t}^{*}$ than under budget set $B_{s}^{*}$.
 \item [(ii)] For any rationalization of $S$, there is a proportion of consumers who are better off at $B^{*}_{t}$ compared to $B_{s}^{*}$. This proportion can have any value in the interval $\left[\underline{\gamma}, 1-\underline{\beta}\right].$ 
\end{itemize}
\begin{align*}
\bar{\gamma}  & = \max_{\nu} \left\lbrace \mathbbm{1}_{B_{t}^{*} >^{\theta} B_{s}^{*}} \cdot \nu\right\rbrace,~~\text{subject to} ~\Gamma\nu=\rho,~\nu \geq 0; \\ 
\underline{\gamma}  & = \min_{\nu} \left\lbrace \mathbbm{1}_{B_{t}^{*} >^{\theta} B_{s}^{*}} \cdot \nu\right\rbrace,~~\text{subject to}~\Gamma\nu=\rho,~\nu \geq 0; \\
\underline{\beta}  & = \min_{\nu} \left\lbrace \mathbbm{1}_{B_{s}^{*} >^{\theta} B_{t}^{*}} \cdot \nu\right\rbrace,~~\text{subject to}~\Gamma\nu=\rho,~\nu \geq 0. 
\end{align*}
Furthermore, $\underline{\gamma} $ and $\bar{\gamma}$ are, respectively, the lower and upper bounds of the proportion of consumers who are revealed to be better off under budget set $B_{t}^{*}$ than under budget set $B_{s}^{*}$. These welfare bounds cannot be improved without further information.
\end{proposition}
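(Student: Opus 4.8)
The plan is to recast the welfare question as a linear-programming problem over the polytope of admissible type distributions, which is exactly what Theorem~\ref{theo1} licenses. By the equivalence of its statements (i) and (ii), every rationalization of $S$ corresponds to a vector $\nu$ in the feasible set $\mathcal{N}=\{\nu\ge 0:\Gamma\nu=\rho\}$, and conversely; since $S$ is assumed RPM-rationalized, $\mathcal{N}$ is nonempty. It is the intersection of the nonnegative orthant with the affine set $\Gamma\nu=\rho$, and the normalization built into $\rho$ forces $\sum_h\nu_h=1$, so $\mathcal{N}$ is a nonempty, convex, compact polytope. The first step is to record that, for a given rationalization $\nu$, the proportion of consumers revealed better off under $B_t^*$ than under $B_s^*$ is precisely the linear functional $w_t(\nu)=\mathbbm{1}_{B_t^*>^\theta B_s^*}\cdot\nu$: by the definition of patch-revealed dominance, whether $B_t^*>^\theta B_s^*$ holds is a deterministic property of a rational type (a column of $\Gamma$), so the indicator row vector selects exactly those types and $\nu$ weights their population mass.

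For part~(i) I would invoke elementary convex analysis. The functional $w_t$ is linear, hence continuous, and $\mathcal{N}$ is convex and compact, so the image $w_t(\mathcal{N})\subseteq\mathbb{R}$ is a compact interval. Its endpoints are attained by the two linear programs defining $\underline{\gamma}=\min_{\nu\in\mathcal{N}}w_t(\nu)$ and $\bar{\gamma}=\max_{\nu\in\mathcal{N}}w_t(\nu)$, with attainment guaranteed by Weierstrass. For any $\lambda\in[\underline{\gamma},\bar{\gamma}]$, writing $\lambda=(1-s)\underline{\gamma}+s\bar{\gamma}$ and setting $\nu=(1-s)\nu_{\min}+s\nu_{\max}$ for a minimizer $\nu_{\min}$ and maximizer $\nu_{\max}$ gives $\nu\in\mathcal{N}$ with $w_t(\nu)=\lambda$ by linearity; this $\nu$ is a rationalization in which exactly the fraction $\lambda$ is revealed better off, which is the claim.

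Part~(ii) is where the real work lies, because it concerns consumers who are \emph{better off} rather than merely \emph{revealed} better off. Here I would partition, for each $\nu\in\mathcal{N}$, the rational types into three groups: those with $B_t^*>^\theta B_s^*$ (mass $w_t(\nu)$, certainly better off at $B_t^*$), those with $B_s^*>^\theta B_t^*$ (mass $w_s(\nu)=\mathbbm{1}_{B_s^*>^\theta B_t^*}\cdot\nu$, certainly worse off), and the remaining ``silent'' types on which the revealed-preference data are uninformative. Because a single column of $\Gamma$ is consistent with many preference functions, the mass of a silent type may be split arbitrarily between better off and worse off; hence the true proportion better off under a fixed $\nu$ can take any value in $[w_t(\nu),\,1-w_s(\nu)]$, the lower endpoint counting only the surely-better-off types and the upper endpoint adding all silent mass.

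The last step identifies the union of these intervals over $\nu\in\mathcal{N}$ with $[\underline{\gamma},1-\underline{\beta}]$. Each interval obeys $w_t(\nu)\ge\underline{\gamma}$ and $1-w_s(\nu)\le 1-\underline{\beta}$, so the union sits inside $[\underline{\gamma},1-\underline{\beta}]$. For the reverse inclusion I would take the segment $\nu(\tau)=(1-\tau)\nu_\gamma+\tau\nu_\beta$ joining a minimizer $\nu_\gamma$ of $w_t$ to a minimizer $\nu_\beta$ of $w_s$, which lies in the convex set $\mathcal{N}$; along it the endpoints $w_t(\nu(\tau))$ and $1-w_s(\nu(\tau))$ vary continuously, attain $\underline{\gamma}$ at $\tau=0$ and $1-\underline{\beta}$ at $\tau=1$, and remain nonempty because the surely-better and surely-worse type sets are disjoint, giving $w_t+w_s\le 1$. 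A connectedness argument then closes the gap: for any target $x\in[\underline{\gamma},1-\underline{\beta}]$ the parameter interval $[0,1]$ cannot be covered by the two disjoint open sets $\{\tau:x<w_t(\nu(\tau))\}$ and $\{\tau:x>1-w_s(\nu(\tau))\}$, so some $\tau$ satisfies $w_t(\nu(\tau))\le x\le 1-w_s(\nu(\tau))$ and $x$ is realized. The main obstacle is precisely this part~(ii): formalizing that silent types carry unidentified welfare and stitching the continuum of per-$\nu$ intervals into the single interval $[\underline{\gamma},1-\underline{\beta}]$. Sharpness is then immediate, since every value in the stated intervals is attained by an honest rationalization of $S$, so no narrower bound can be consistent with the data alone.
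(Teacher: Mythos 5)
The paper does not actually supply a proof of Proposition~\ref{prop3}: it only remarks that the result is ``analogous to Proposition~4 of DKSQ'' and leaves the argument to that reference. Your proposal is therefore best read as filling that gap, and the route you take --- identifying rationalizations with the polytope $\mathcal{N}=\{\nu\ge 0:\Gamma\nu=\rho\}$ via Theorem~\ref{theo1}, observing that $\Gamma\nu=\rho$ forces $\sum_h\nu_h=1$ so that $\mathcal{N}$ is compact and convex, and then running the linear functional $\mathbbm{1}_{B_t^*>^\theta B_s^*}\cdot\nu$ over it --- is exactly the DKSQ-style argument the author intends. Part~(i) is complete: the image of a compact convex set under a linear map is a closed interval, its endpoints are the two linear programs, and convex combinations of a minimizer and a maximizer realize every intermediate $\lambda$. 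Your connectedness argument for stitching the per-$\nu$ intervals $[w_t(\nu),1-w_s(\nu)]$ into $[\underline{\gamma},1-\underline{\beta}]$ in part~(ii) is also sound, and your observation that the two dominance events are disjoint (so $w_t+w_s\le 1$) follows directly from the patch-dominance definition, since $x_{l\mid t}>^{\theta}x_{k\mid s}$ requires $p_s z_s<p_s y_t$ while the reverse dominance would require the opposite inequality.

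The one step you assert rather than establish is the claim that the mass of a ``silent'' type --- a WARP-consistent column for which neither $B_t^*>^\theta B_s^*$ nor $B_s^*>^\theta B_t^*$ holds --- can be split arbitrarily between better-off and worse-off. This needs two ingredients you do not supply: first, that revealed dominance implies actual preference for \emph{every} preference function in $\mathcal{R}$ that strictly rationalizes the type's demands (this follows from Definition~\ref{def8}, since $p_t y_t\ge p_t z_s$ forces $r(y_t,z_s)>0$, but it should be said); and second, an explicit construction, in the spirit of the max--min construction in the proof of Lemma~\ref{lem1}, of two preference functions in $\mathcal{R}$ consistent with the same silent choice pattern, one ranking the chosen bundle at $B_t^*$ above that at $B_s^*$ and one the reverse. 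Without that construction the upper endpoint $1-w_s(\nu)$ of your per-$\nu$ interval is an upper bound but not demonstrably attained, and the sharpness claim at the end of the proposition is not yet proved. This is the same work DKSQ do with explicit utility functions in their Proposition~4; here it must be redone with skew-symmetric, strictly increasing, piecewise concave preference functions.
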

This proposition is analogous to proposition $4$ of DKSQ that also deals with the recoverability of random preferences. The main difference is that DKSQ's proposition partially identifies the proportion of consumers who prefer one price over another and uses the generalized axiom of price preference (GAPP) to define rationality over prices that require preference relations to be transitive.

\begin{example}
 In the example \ref{figur3}, in rational type $1$, $x_{0 \mid 2} >^{\theta_{1}} x_{0 \mid 1}$ and the proportion of consumers who are revealed to be better off with $B_{2}^{*}$ than $B_{1}^{*}$ is $\rho_{0 \mid 1}$. In rational type $2$, $x_{1 \mid 1} >^{\theta_{2}} x_{1 \mid 2}$ and the proportion of consumers who are revealed to be better off with $B_{1}^{*}$ than $B_{2}^{*}$ is $\rho_{1 \mid 2}$. The point identification of the parameter of interest takes place because the rational matrix has a full rank. However, the proportion of consumers who are better off with $B_{2}^{*}$ than $B_{1}^{*}$ is $\left[\rho_{0 \mid 1}, \rho_{0 \mid 2}\right]$ and the proportion of consumers who are better off with $B_{1}^{*}$ than $B_{2}^{*}$ is $\left[\rho_{1 \mid 2}, \rho_{1 \mid 1}\right]$.
\end{example}

\subsection{Nonparametric Counterfactuals in RPMs}
In this subsection, I bound the features of counterfactual choices in the nonparametric RPM of demand when observable choices are repeated across sections with unrestricted, unobserved heterogeneity. Given a rational type $\theta$, I take rationalizable stochastic choices $\rho=(\rho_{1}, \ldots, \rho_{T})$ as given, and ask what discipline they place on
$$\rho_{0}(x_{i \mid 0})=\int \mathbf{1}\left\lbrace y(\theta) \in x_{i \mid 0}:~r_{\theta}(y(\theta), z(\theta)) > 0, ~p_{0}y(\theta) \geq p_{0}z(\theta),~\forall z(\theta) \in B_{0} \right\rbrace d\mu(r),~\forall~i \in \mathcal{I}_{0}$$
where $\rho_{0}(x_{i \mid 0})$ is the stochastic choice at some counterfactual budget $B_{0}^{*}=\left\lbrace x_{i \mid 0}\right\rbrace_{i \in \mathcal{I}_{0}}$ that corresponds to the counterfactual price $p_{0}$. $B_{0}=\left\lbrace y \in X: p_{0}y \leq w_{0}\right\rbrace$ is the budget set in period $0$. As in \cite{kitamura2019nonparametric}, this discipline will typically take the form of bounds, and a distribution $\rho_{0}$ of demands on $B^{*}_{0}$ is inside the bounds if $(\rho_{0}, \rho_{1} \ldots, \rho_{T})$ are jointly consistent with the RPM.\vspace{0.25cm}
Consider a repeated cross-sectional data set $S= \left\lbrace p_{t}, \rho_{t}\right\rbrace_{t \in \mathbb{T}}$ in which each observation consists of the prevailing price $p_{t}$ and each observation corresponds to the distribution of choices made by a population of consumers.The data set $S$ satisfies the assumptions \ref{ass1} and \ref{ass2} and is rationalized by the RPM.
The following theorem is analogous to Theorem $3$ of \cite{kitamura2019nonparametric} that deals with nonparametric counterfactuals in RUMs. This theorem bounds the expectation of the (functionals of) counterfactual stochastic demand. 
\begin{theorem}{\label{theor3}}
Given a rational type $\theta$ and any known function $h \colon \mathbb{R}^{L} \to  \mathbb{R} $ is bounded on $B_{0}^{*}$. 
\begin{align*}
\underline{h}_{i \mid 0}=\inf\left\lbrace h(y(\theta)): y(\theta) \in x_{i \mid 0}\right\rbrace, ~ i \in \mathcal{I}_{0}\\
\bar{h}_{i \mid 0}=\sup \left\lbrace h(y(\theta)): y(\theta) \in x_{i \mid 0}\right\rbrace, ~i \in \mathcal{I}_{0}
\end{align*}
Then the lower and upper bounds of $\mathbb{E}\left(h(y(\theta))\right) $ are defined as follows:
\begin{align*}
 \min \left\lbrace \left(\underline{h}_{0 \mid 0},\ldots~ \underline{h}_{I_{0} \mid 0}\right)  \Gamma_{0} \nu:~ \tilde{\Gamma} \nu=\tilde{\rho},~ \nu \geq 0 \right\rbrace  \\
  \leq \mathbb{E}\left(h(y(\theta))\right) \leq \\
   \max \left\lbrace \left(\bar{h}_{0 \mid 0},\ldots~ \bar{h}_{I_{0} \mid 0}\right)  \Gamma_{0} \nu:~ \tilde{\Gamma}\nu=\tilde{\rho},~ \nu \geq 0 \right\rbrace, 
\end{align*}
 
 where $\tilde{\rho}=\left[\rho_{1} \ldots \rho_{T}\right]^{'}$ and 
  $\tilde{\Gamma}=\left[\Gamma_{1} \ldots \Gamma_{T}\right]^{'}$
  
 They are sharp and can only be improved on with further information.
\end{theorem}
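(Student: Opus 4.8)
The plan is to reduce the statement to a pair of linear programs over the polytope of type distributions consistent with the observed data, exactly in the spirit of the counterfactual analysis of \cite{kitamura2019nonparametric}, and then to argue sharpness by an explicit construction that exploits the patch structure. First I would invoke Theorem \ref{theo1}: since $S$ is RPM-rationalized, the set
$$N = \left\lbrace \nu \geq 0 : \tilde{\Gamma}\nu = \tilde{\rho} \right\rbrace$$
is a nonempty compact polytope, and a distribution over the counterfactual patches $\left\lbrace x_{i \mid 0}\right\rbrace_{i \in \mathcal{I}_{0}}$ is jointly consistent with $(\rho_{1}, \ldots, \rho_{T})$ under the RPM if and only if it equals $\Gamma_{0}\nu$ for some $\nu \in N$. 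Thus every rationalization of the data corresponds to a feasible $\nu \in N$, and the implied distribution over budget-$0$ patches is $\Gamma_{0}\nu$, whose $i$-th coordinate is the mass the rationalization assigns to patch $x_{i \mid 0}$.

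The key reduction is to bound $\mathbb{E}\left(h(y(\theta))\right)$ for a fixed rationalization. Conditioning on the budget-$0$ patch into which $y(\theta)$ falls and using $\underline{h}_{i \mid 0} \leq h(y) \leq \bar{h}_{i \mid 0}$ for every $y \in x_{i \mid 0}$, I obtain
$$\left(\underline{h}_{0 \mid 0}, \ldots, \underline{h}_{I_{0} \mid 0}\right)\Gamma_{0}\nu \;\leq\; \mathbb{E}\left(h(y(\theta))\right) \;\leq\; \left(\bar{h}_{0 \mid 0}, \ldots, \bar{h}_{I_{0} \mid 0}\right)\Gamma_{0}\nu .$$
Minimizing the left-hand functional and maximizing the right-hand functional over $\nu \in N$ yields the two linear programs in the statement; both optima are attained because $N$ is a nonempty compact polytope (nonemptiness comes from $S$ being RPM-rationalized) and the objectives are linear. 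This already establishes that $\mathbb{E}\left(h(y(\theta))\right)$ cannot fall outside the stated interval for any rationalization, i.e.\ the necessity direction.

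The main obstacle is sharpness. For the upper bound I would take a maximizer $\nu^{\ast}$ and build a rationalization attaining it. By Lemma \ref{lem1}, each rational type (a column of $\tilde{\Gamma}$, extended by its budget-$0$ patch in $\Gamma_{0}$) can be strictly rationalized by a continuous, strictly increasing, skew-symmetric, piecewise concave preference function, so assigning mass $\nu^{\ast}_{\theta}$ to such functions gives a valid RPM rationalization. The crucial observation is that the directly-revealed-preference relations among patches depend only on patch membership, not on the precise bundle chosen within a patch; hence, for each type whose budget-$0$ demand lies in $x_{i \mid 0}$, I may relocate that demand to a bundle where $h$ comes within $\varepsilon$ of $\bar{h}_{i \mid 0}$ without disturbing WARP-consistency across budgets $0, 1, \ldots, T$. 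Invoking Assumptions \ref{ass1} and \ref{ass2} to rule out ties and intersection-patch mass, and letting $\varepsilon \to 0$, shows $\mathbb{E}\left(h(y(\theta))\right)$ can be pushed arbitrarily close to the upper bound; a symmetric construction handles the lower bound. The delicate step is precisely this within-patch placement argument — verifying that moving the budget-$0$ bundle to the $h$-extremal point preserves the type's joint rationalizability — together with the limiting argument needed when the supremum or infimum of $h$ over a patch is not attained, since $h$ is only assumed bounded. Establishing both directions shows the bounds are tight and cannot be improved without further information.
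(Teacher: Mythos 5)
Your proposal is correct and follows essentially the same route as the paper, which itself defers to Theorem 3 of Kitamura and Stoye (2019): decompose $\mathbb{E}\left(h(y(\theta))\right)$ by conditioning on the budget-$0$ patch, bound the conditional expectations by $\underline{h}_{i \mid 0}$ and $\bar{h}_{i \mid 0}$, identify the feasible counterfactual patch distributions with $\left\lbrace \Gamma_{0}\nu : \tilde{\Gamma}\nu=\tilde{\rho},\ \nu \geq 0\right\rbrace$, and obtain sharpness by relocating within-patch demands toward $h$-extremal points. You actually supply more detail than the paper's own (largely omitted) proof, in particular the $\varepsilon$-approximation handling the case where the supremum or infimum over a patch is not attained.
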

I can use the result from the previous theorem to bound the expected value of $h(y(\theta))=z^{'}_{0}y(\theta)$, where $z_{0} \in \mathbb{R}^{L}_{+}$ is a vector specified by the user. For example, $z_{0}= (1, 2,\ldots,0)^{'}$ can be thought of as extracting the demand for goods $1$ and $2$, and the upper and lower bounds are the welfare bounds on the expected stochastic demand for goods $1$ and $2$. $z_{0}= (p_{0}^{1}/p_{0}y(\theta), 0, \ldots, 0)^{'}$ can be thought of as extracting the share of expenditure on good $1$, and then the previous upper and lower bounds are the welfare bounds for the share of expenditure on good $1$. \vspace{0.25cm}

It is important to note that the bounds derived in Proposition \ref{prop3} and Theorem \ref{theor3} would be more informative if the data $S$ were RUM-rationalized due to the additional constraint imposed by the transitivity assumption. If the data $S$ is rationalized by the RUM and the RPM, the bounds derived in Proposition \ref{prop3} and Theorem \ref{theor3} and their counterparts under the RUMs will usually be nested. This is because replacing the WARP with the SARP does not significantly improve the bounds of revealed preferences \cite[][]{blundell2015sharp, cosaert2018nonparametric, aguiar2021cardinal}. Moreover, \cite{aguiar2021cardinal} in experimental and scanner datasets find that the performance of WARP is relatively equivalent to that of GARP.

\begin{remark}
I can improve the welfare bounds as well as the bounds derived in Proposition \ref{prop3} and Theorem \ref{theor3} by introducing \textit{triangular configuration}. To achieve this improvement, I must construct and work with another rational demand matrix in which each column obeys the WARP and the \textit{triangular configuration}. Due to the additional price restrictions, the resulting rational demand matrix will undoubtedly have fewer columns than the matrix of WARP-consistent types of $\Gamma$. Ultimately, I can obtain more informative bounds that are still robust to deviations from the transitivity of preferences.
\end{remark}

\section{Conclusion}
This paper extends the scope of the preference function models by adding random preference models. To characterize the latter, I have developed a WARP-based approach in a setting where heterogeneity of preferences is not directly observable. I show that the RPM characterization is amenable to statistical testing using KS’s statistical tools. The null hypothesis to be tested was that the data were generated by an RPM characterizing a heterogeneous population in which the only behavioral constraints were "more is better" and the WARP. Based on an analysis of a subset of the FES data, I find evidence that contradicts the RUM and supports the RPM. In addition, I have shown how the results of the RPM test can help identify nontransitive preferences when there is variation in the menu. \vspace{0.20cm}

My approach has three attractive features. First, the results are completely nonparametric, which means they are independent of the functional form of the underlying preference functions. Second, I impose minimal requirements on the structure of individual unobserved heterogeneity. Third, I do not restrict the number of goods. In contrast to the approaches of \cite{deb2023revealed} and \cite{kitamura2019nonparametric}, my analysis of welfare and counterfactuals does not rely on the full transitivity requirements. Therefore, it can provide wider bounds, although these can be improved with further information or constraints such as \textit{triangular configuration}. \vspace{0.20cm}

Overall, RPMs prove to be robust alternatives to traditional RUMs as they provide valuable insights into welfare analysis that go beyond what RUMs can provide. These results should be helpful for practitioners of revealed preferences, as WARP is much easier to handle than SARP and is more successful from an empirical perspective.\vspace{0.25cm}

\section{Appendix}
\textbf{Proof of the Lemma \ref{lem1}}
\begin{definition}\cite[]{matzkin1991testing}
    Consider a finite data set  $O^{T}=\left\lbrace p_{t}, y_{t} \right\rbrace_{t \in \mathbb{T}} $. The
following statements are equivalent:
\begin{itemize}
    \item[(i)] The data $O^{T}$ can be strictly rationalized by a utility function.
    \item [(ii)] The data $O^{T}$ satisfy SARP.
    \item [(iii)] The numbers $u_{t}>0$ and $\lambda_{t} > 0$ for all $t \in \mathbb{T}$ exist such that the inequalities:
    \begin{align*}
        \text{if}~ y_{t} & \neq y_{s}, ~\text{then}~ u_{t}-u_{s} > \lambda_{t}p_{t} \left(y_{t}-y_{s}\right), \\
          \text{if}~ y_{t} & = y_{s}, ~\text{then}~ u_{t}-u_{s} =0, 
    \end{align*}
    hold for all $s, t \in \mathbb{T}$.
(iv) The numbers $v_{t}$ for all $t \in \mathbb{T}$ exist such that the inequalities: 
 \begin{align*}
        \text{if}~ y_{t} & \neq y_{s} ~\text{and}~p_{t} \left(y_{t}-y_{s}\right)\geq 0~ \text{then}, ~v_{t}-v_{s} >0,\\
          \text{if}~ y_{t} & = y_{s}, ~\text{then}~ v_{t}-v_{s} =0, 
    \end{align*}
      hold for all $s, t \in \mathbb{T}$.
\end{itemize}
\end{definition}
\begin{proof}
The proof of this theorem is adapted from \cite{aguiar2020rationalization}. \vspace{0.25cm}
Assume that the WARP condition holds. If so, then I can show that a continuous, skew-symmetric, strictly piecewise concave preference function rationalizes the data. 
If the WARP condition holds, then the data $O^{T}$ can be broken into a pairwise data set $O^{2}_{ts}$. Consider the $T^{2}$ data sets $O^{2}_{ts}$ for each pair of observations $s, t \in \mathbb{T}$. Since $O^{2}_{ts}$ obeys SARP, it follows from the theorem of \cite{matzkin1991testing} that $O^{2}_{ts}$ can be strictly rationalized by a continuous, strictly increasing and strictly concave utility function. The definition of it is $u_{st}: X \to \mathbb{R}$. Now, I define the mapping of $r_{st}: X \times X \to \mathbb{R}$ as follows:
\[ r_{st}(x, y)= \left\{
\begin{array}{ll}
 u_{st}(x)-u_{st}(y) & s \neq t \\
 p_{t}((x-y)-\epsilon \left(g(x-x_{t})-g(y-x_{t})\right) & s=t \\
\end{array}
\right. \]
for any small $\epsilon$ and where the function $g$ is defined as in \cite{matzkin1991testing}. Each function $r_{st}$ is continuous, strictly concave, skew-symmetric, and strictly increasing because the continuous $u_{st}$ strictly increases and is strictly concave. Next, I define the preference function $r$ as follows: 
For all  $x, y \in X$, 
\begin{align*}
r(x, y)=\max_{\lambda \in \Delta}\min_{\mu \in \Delta}\displaystyle\Sigma _{s=1}^{T}\displaystyle\Sigma _{t=1}^{T} \lambda_{s}\mu_{t}r_{st}(x, y) \\
r(x, y)=\min_{\mu \in \Delta} \max_{\lambda \in \Delta}\displaystyle\Sigma _{s=1}^{T}\displaystyle\Sigma _{t=1}^{T} \lambda_{s}\mu_{t}r_{st}(x, y) \\
\end{align*}
where $\Delta=\left\lbrace \lambda \in \mathbb{R}^{L}_{+}: \sum_{t=1}^{T} \lambda_{t}=1 \right\rbrace$ is a $T-1$ dimensional simplex.
Next, I prove that $r$ strictly rationalizes the data set $O^{T}$. Consider $y \in X$ and some fixed $t\in \mathbb{T}$ such that $y_{t}=y$ and $p_{t}y_{t} \geq p_{t}y$. Let $\mu \in \Delta$ is the vector such that $\mu_{j}^{t}=0$ if $j \neq t$ and $\mu_{j}^{t}=1$ if $j=t$.
Then,
\begin{align*}
r(y_{t}, y) &=\max_{\lambda \in \Delta}\min_{\mu \in \Delta}\displaystyle\Sigma_{s=1}^{T}\displaystyle\Sigma_{t=1}^{T} \lambda_{s}\mu_{t} r_{st}(y_{t}, y) \\
&\geq \min_{\lambda \in \Delta}\displaystyle\Sigma_{i=1}^{T}\displaystyle\Sigma_{j=1}^{T} \lambda_{i}\mu^{j}_{t}r_{ij}(y_{t}, y) \\
&= \min_{\lambda \in \Delta}\displaystyle\Sigma_{i=1}^{T} \lambda_{i}r_{it}(y_{t}, y)
\end{align*}
Since for each data set $O^{2}_{ts}$, $p_{t}y_{t} \geq p_{t}y$ and $u_{st}$ strictly rationalizes $O^{2}_{ts}$, $u_{st}(y_{t}) > u_{st}(y)$. Hence, $r_{st}(y_{t}, y) > 0$. It follows that $r(y_{t}, y) > 0$ for each data set $O^{2}_{ts}$. So, $r$ rationalizes $O^{T}$.
I now prove that the preference function $r$ is skew-symmetric and strictly increases.
First, I show the skew-symmetry.
\begin{align*}
-r(x, y) &=-\min_{\lambda \in \Delta}\max_{\mu \in \Delta}\Sigma _{s=1}^{T}\Sigma _{t=1}^{T} \lambda_{s}\mu_{t}r_{st}(x, y) \\
&=\max_{\lambda \in \Delta}\min_{\mu \in \Delta}\Sigma _{s=1}^{T}\Sigma _{t=1}^{T} \lambda_{s}\mu_{t}\left(-r_{st}(x, y) \right) \\
&=\max_{\lambda \in \Delta}\min_{\mu \in \Delta}\Sigma _{s=1}^{T}\Sigma _{t=1}^{T} \lambda_{s}\mu_{t} r_{st}(y, x) ~\text{since $r_{st}$ is skew-symmetric}\\
&=\min_{\mu \in \Delta}\max_{\lambda \in \Delta}\Sigma _{s=1}^{T}\Sigma _{t=1}^{T} \lambda_{s}\mu_{t} r_{st}(y, x) \\
&=r(y, x)
\end{align*} 
Since $-r(x, y)=r(y, x)$, the preference function $r$ is skew-symmetric.
The preference function $r$ strictly increases. 
 
 Consider any $x, y, z \in X$ such that $x > y$. Then:
\begin{align*}
r_{st}(x, z) &=u_{st}(x)-u_{st}(z) \\
 &>u_{st}(y)-u_{st}(z) ~~\text{because}~u_{st} ~\text{strictly increasing}\\
 &=r_{st}(y, z)
\end{align*}
This equation means that: $\max_{\lambda \in \Delta}\Sigma_{s=1}^{T}\Sigma_{t=1}^{T} \lambda_{s}\mu_{t} r_{st}(x, z) >\max_{\lambda \in \Delta}\Sigma_{s=1}^{T}\Sigma_{t=1}^{T} \lambda_{s}\mu_{t} r_{st}(y, z)$ for all $\lambda \in \Delta$. It follows that $r(x, z) > r(y, z)$. Thus, $r$ strictly increases. \vspace{0.20cm}

Now, I prove that the preference function $r$ is continuous. 
By definition, the simplex $\Delta$ consists of a finite number of elements. $\Delta$ is compact because any finite set in a metric space is compact.
The above has shown that $r_{st}$ is continuous because $u_{st}$ is continuous. Hence, for every $\lambda, \mu \in \Delta$, $$f(x, y; \lambda, \mu)=\Sigma_{s \in \mathbb{T}}\Sigma_{t \in \mathbb{T}} \lambda_{s}\mu_{t} r_{st}(x, y)~\text{is continuous}$$
By a direct application of Berge’s maximum theorem, it follows that:

$r(x, y) = \min_{\mu \in \Delta} \left\lbrace \max_{\lambda \in \Delta}\Sigma_{s \in \mathbb{T}}\Sigma _{t \in \mathbb{T}} \lambda_{s}\mu_{t}r_{st}(x, y)\right\rbrace $ is a continuous function for all $x, y \in X$.\vspace{0.20cm}

Now I prove that the preference function $r$ is strictly piecewise concave. Because $u_{st}$ is strictly concave, I know that $r_{st}$ is strictly concave by construction. Consequently, for every $\lambda, \mu \in \Delta$
$$f(x, y; \lambda, \mu)=\Sigma_{s \in \mathbb{T}}\Sigma_{t \in \mathbb{T}} \lambda_{s}\mu_{t} r_{st}(x, y)~~\text{is strictly concave}$$
It follows that: $$f(x, y; \lambda, \mu)=\min_{\mu \in \Delta}
\Sigma_{s \in \mathbb{T}}\Sigma_{t \in \mathbb{T}} \lambda_{s}\mu_{t} r_{st}(x, y)~~\text{is strictly concave}$$
Thus, $$r(x, y)=\max_{\lambda \in \Delta} \left\lbrace \min_{\mu \in \Delta}\Sigma _{s=1}^{T}\Sigma _{t=1}^{T} \lambda_{s}\mu_{t}r_{st}(x, y)\right\rbrace~~\text{is strictly piecewise concave}$$
I have just proved that the preference function $r$ is continuous, skew-symmetric, strictly increasing, strictly piecewise-concave and also rationalizes the data $O^{T}$.\vspace{0.25cm}

Conversely, assume the data $O^{T}$ can be rationalized by a continuous, strictly increasing, and skew-symmetric preference function. I now show that $O^{T}$ obeys the WARP.
If the WARP is violated, there are two observations $s, t \in \mathbb{T}$ such that $p_{t}y_{t} \geq p_{t}y_{s}$ and $p_{s}y_{s}> p_{s}y_{t}$. By strict rationalization in definition \ref{def8}, I have $r(y_{t}, y_{s})>0$ and $r(y_{s}, y_{t})>0$, and these inequalities contradict the skew-symmetric property of $r$. 
\end{proof}
\vspace{0.5cm}

\textbf{Proof of Theorem \ref{theo1}}
\begin{proof}
Suppose the repeated cross-sectional data set $S$ is rationalized by the RPM. I show that there is a $\nu$ such that $\Gamma \nu =\rho$. 
If the repeated cross-sectional data set $S=\left\lbrace p_{t}, \rho_{t}\right\rbrace_{t \in \mathbb{T}}$ is rationalized by the RPM, then there is a probability space $(\Theta, \mathcal{F},\mu)$ and a random variable $y:\Theta \to \mathbb{R}^{LT}_{+}$ so that almost surely, $\left\lbrace p_{t}, y_{t}(\theta)\right\rbrace_{t \in \mathbb{T}}$ obeys the WARP. I can, therefore, construct the matrix of all potential types $\mathcal{A}$ and the matrix of WARP-consistent types of $\Gamma$.
Suppose $S$ can be rationalized by a distribution $\mu$. Let $\nu_{a}$ denote the mass of consumer types of $a$ in the population. For a given observation $t$, let $\mathcal{A}_{i \mid t} = \left\lbrace a \in \mathcal{A}: a_{i \mid t}=1\right\rbrace$, where $\mathcal{A}_{i \mid t}$ is the subset of WARP-consistent types that have their demand in the patch $x_{i \mid t}$ at observation $t$ and
 \begin{align*}
 a_{i \mid t}(r)=\mathbf{1}\left\lbrace y_{t} \in x_{l \mid t}:~r(y_{t}, z) \geq 0, ~p_{t}^{'}y_{t}\geq p_{t}z, \forall z \in B_{t} \right\rbrace.
\end{align*}
 Thus, the proportion of the population whose type belongs to $\mathcal{A}_{i \mid t}$ is
 $$ \mu\left(\left\lbrace \theta \in \Theta: y_{t}(\theta) \in x_{i \mid t}\right\rbrace \right) = \displaystyle\sum_{a \in \mathcal{A}_{i \mid t}} \nu_{a}=\displaystyle\sum_{a \in \mathcal{A}} \nu_{a} a_{i \mid t}=1, $$
 $$ \nu_{a}=\mu\left( \left\lbrace \Theta \in \theta: y_{t}(\theta) \in x_{i \mid t},~ if~ a_{i \mid t}=1,~\text{for all}~t \in \mathbb{T} \right\rbrace \right). $$
 Since $S$ is rationalized by $\mu$ and since by definition $\rho_{t}(\mathcal{Y})=\mu \left( \left\lbrace \theta \in \Theta: y_{t}(\theta) \in \mathcal{Y} \right\rbrace \right)$,
 
 setting $\mathcal{Y}=\left\lbrace y \in \mathbb{R}^{L}_{+}: y \in x_{i \mid t} \right\rbrace $, I obtain $\rho_{i \mid t}=\displaystyle\sum_{a \in \mathcal{A}} \nu_{a} a_{i \mid t}$. Therefore, the observed probability of choices falling on $x_{i \mid t}$ must equal the mass of WARP-consistent classes implied by $\mu$. This condition must hold for all patches $x_{i \mid t}$, and thus the condition $\rho_{i \mid t}=\displaystyle\sum_{ a \in \mathcal{A}} \nu_{a} a_{i \mid t}$ can be extended and written as $\Gamma \nu =\rho$, where $\nu$ is the column vector $\left(\nu_{a}\right)_{a \in \mathcal{A}}$ and $\rho$ is the vector of observed patch probabilities. \vspace{0.25cm}
 
$(ii) \Rightarrow (i)$
Suppose there is a probability vector $\nu \in \Delta^{H-1}$ such that the discretized choice probability $\rho$ satisfies the condition~ $\Gamma \nu=\rho$. Let us show that $S$ is rationalized by the RPM.

Given that $\rho_{t}=\left(\rho_{i \mid t} \right)_{i \in \mathcal{I}_{t}}$ is a probability measure for $B^{*}_{t}$, I define $\tilde{\rho}_{i \mid t}(y)=\rho_{t}(y \mid y \in K_{i \mid t})$ as the conditional distribution of demand at observation $t$ if it is restricted to the cone 

$K_{i \mid t}=\left\lbrace \gamma \cdot y: y \in x_{i \mid t}, \gamma >0 \right\rbrace $. If $\mathcal{Y}$ is a measurable subset of $\mathbb{R}^{L}_{+}$, then
 $\rho_{t}(\mathcal{Y} \mid \mathcal{Y} \in K_{i \mid t})=\frac{\rho \left(\mathcal{Y} \cap  K_{i \mid t}\right)}{\rho \left(K_{i \mid t} \right)}$. From this equation the following can be written,
 $\rho \left(\mathcal{Y} \cap K_{i \mid t}\right)=\rho_{i \mid t}\tilde{\rho}_{i \mid t}({\mathcal{Y}})$, where $\rho \left(K_{i \mid t}\right)=\rho_{i \mid t}$. 
 If $ \mathcal{Y} \cap K_{i \mid t}= \emptyset$, then $\rho \left( \mathcal{Y} \cap K_{i \mid t}\right)=0$. \vspace{0.25cm}
 
Given $a$ and $t$, there is a unique $i \in \mathcal{I}_{t}$ such that $a_{i \mid t}=1$.
Let $x_{t}^{a}=K_{i \mid t}$ and let $\tilde{\rho}_{t}^{a}$ be the probability measure on $B_{t}^{*}$ such that:
 $\tilde{\rho}_{t}^{a}=\tilde{\rho}_{i \mid t}$. It is clear that $\tilde{\rho}_{t}^{a}(x_{t}^{a})=1$.
 Let $\lambda_{a}$ be the product measure on $\left( B^{*}\right)^{T} \subseteq \left(\mathbb{R}^{L}_{+}\right)^{T}$ given by $\lambda_{a}=\bigtimes_{t \in \mathbb{T}} \tilde{\rho}_{t}^{a} $ and $\left( B^{*}\right)^{T}=\left(B_{1}^{*} \times B_{2}^{*} \times \ldots \times B_{T}^{*}\right)$. From the definition of $x_{t}^{a}$ it follows that:
 $$\bigtimes_{t \in \mathbb{T}} \tilde{\rho}_{t}^{a} \subseteq \left\lbrace y \in \left( \mathbb{R}^{L}_{+}\right)^{T}: \left\lbrace \left( p_{t}, y_{t}\right) \right\rbrace_{t \in \mathbb{T}} ~\text{satisfies WARP}\right\rbrace, $$
 and since $\tilde{\rho}_{t}^{a}(x_{t}^{a})=1$ for all $t$, it follows that:
 $$\lambda_{a}\left(\left\lbrace y \in \left( \mathbb{R}^{L}_{+}\right)^{T}: \left\lbrace \left( p_{t}, y_{t}\right) \right\rbrace_{t \in \mathbb{T}} ~\text{satisfies WARP} \right\rbrace \right) =1$$ Note that $y_{t}$ refers to the $t$-th entry of $y$.
Define $\Omega= \mathcal{A} \times \left( B^{*}\right)^{T}$ and the probability measure $\mu$ for $\Omega$ by $\mu\left( \left\lbrace a \right\rbrace \times \mathcal{Y}\right)=\nu_{a} \lambda_{a} (\mathcal{Y})$ for every measurable set $\mathcal{Y} \subseteq \left( B^{*}\right)^{T}$, where $\nu_{a}$ refers to the $a_{th}$ entry of $\nu$. Finally, I define:
 $\mathcal{X} : \Omega \to \left( \mathbb{R}^{L}_{+} \right)^{T}$
 as $\mathcal{X}\left(a, y \right)=y$. Setting $A= \mu \left( \left\lbrace (a, y) \in \Omega: \left\lbrace \left( p_{t}, \mathcal{X}_{t}(a, y) \right) \right\rbrace _{t \in \mathbb{T}}~~\text{satisfies WARP} \right\rbrace \right) $,
 \begin{align*}
A=\sum_{a \in \mathcal{A}} \nu_{a} \lambda_{a}\left( \left\lbrace y \in \left( \mathbb{R}^{L}_{+}\right)^{T}: \left\lbrace \left( p_{t}, \mathcal{X}_{t}(a, y) \right) \right\rbrace_{t \in \mathbb{T}}~~\text{satisfies WARP} \right\rbrace \right)=\sum_{a \in \mathcal{A}} \nu_{a} =1.
 \end{align*}
Next, I show that $ \mu \left( \left\lbrace (a, y) \in \Omega: \mathcal{X}_{t}(a, y) \in \mathcal{Y} \cap K_{i \mid t} \right\rbrace \right)= \rho_{t} \left( \mathcal{Y} \cap K_{i \mid t}\right)$ for all patch $K_{i \mid t}$.
For each measurable set $\mathcal{Y}$ in $\mathbb{R}^{L}_{+}$ and for each $K_{i \mid t},$
 \begin{align*}
 B &= \mu \left( \left\lbrace (a, y) \in \Omega: \mathcal{X}_{t}(a, y) \in \mathcal{Y} \cap K_{i \mid t} \right\rbrace \right) \\
 &=\sum_{a \in \mathcal{A}} \nu_{a} \lambda_{a}\left( \left\lbrace y \in \left( \mathbb{R}^{L}_{+} \right)^{T}: \mathcal{X}_{t}(a, y) \in \mathcal{Y} \cap K_{i \mid t} \right\rbrace \right) \\
 &=\sum_{a \in \mathcal{A}} \nu_{a} \lambda_{a}\left( \left\lbrace y \in \left( \mathbb{R}^{L}_{+} \right)^{T}: y_{t} \in \mathcal{Y} \cap K_{i \mid t} \right\rbrace \right) \\
 &=\sum_{a \in \mathcal{A}} \nu_{a} \tilde{\rho}_{t}^{a} \left( \left\lbrace y_{t} \in \mathbb{R}^{L}_{+} : y_{t} \in \mathcal{Y} \cap K_{i \mid t} \right\rbrace \right).
 \end{align*}
 Recall that $\tilde{\rho}_{t}^{a} \left( \left\lbrace y_{t} \in \mathbb{R}^{L}_{+} : y_{t} \in \mathcal{Y} \cap K_{i \mid t} \right\rbrace \right) =0$ if $a \notin \mathcal{A}_{i \mid t}$. From this equation the following occurs:
 \begin{align*}
 \sum_{a \in \mathcal{A}} \nu_{a} \tilde{\rho}_{t}^{a} \left( \left\lbrace y_{t} \in \mathbb{R}^{L}_{+} : y_{t} \in \mathcal{Y} \cap K_{i \mid t} \right\rbrace \right) & = \sum_{a \in \mathcal{A}_{i \mid t}} \nu_{a} \tilde{\rho}_{t}^{a} \left( \left\lbrace y_{t} \in \mathbb{R}^{L}_{+}: y_{t} \in \mathcal{Y} \cap K_{i \mid t} \right\rbrace \right) \\
 &=\frac{\rho_{t}\left( \mathcal{Y} \cap K_{i \mid t}\right) }{\rho_{i \mid t}} \sum_{a \in \mathcal{A}_{i \mid t}} \nu_{a}\\
 &=\rho_{t} \left(\mathcal{Y} \cap K_{i \mid t}\right).
 \end{align*}
 The last equality follows from the fact that $\rho_{i \mid t}=\sum_{a \in \mathcal{A}_{i \mid t}} \nu_{a}=\sum_{a \in \mathcal{A}} \nu_{a} a_{i \mid t}$.\vspace{0.5cm}

$(ii) \Leftrightarrow (iii)$
To prove the equivalence between $(ii)$ and $(iii)$, I can follow that of \cite{mcfadden2006revealed} for RUMs. The following are equivalents.
\begin{itemize}
    \item[(a)] There is a probability vector $\nu \in \Delta^{H-1}$, so that the discretized choice probability $\rho$ satisfies the condition~ $\Gamma \nu=\rho$.
    \item [(b)] The system of linear inequalities $\rho \leq \Gamma \nu$, $\nu >0$, $\mathbf{1}^{'} \nu \leq 1$ has a solution.
    \item [(c)] The linear program $\min_{\nu, s} \mathbf{1}^{'}s$ subject to, $\nu \geq 0$, $s\geq 0$, $\Gamma \nu +s \geq \rho$, $\mathbf{1}^{'} \nu \leq 1$ has an optimal solution with $s = 0$
     \item [(d)] The linear program $\max_{r, t}~(r^{'} \rho - t)$ subject to $0 \leq r\leq 1$, $t \geq 0$, and $r^{'}\Gamma \leq t\mathbf{1}^{'} $ has no positive solution.
\item[(v)] $\rho$ satisfies WARSP.
\end{itemize}
\vspace{0.5cm}

$(ii) \Leftrightarrow (iv)$
The proof uses nonstochastic demand systems that can be identified with vectors:
$(d_{1}, \ldots, d_{T}) \in \bigtimes_{t \in \mathbb{T}} B_{t}$. This demand system can be rationalized if there is a preference function $r$ such that $r(d_{t}, y)>0$, $ \forall y \in B_{t}$ and $\forall t \in \mathbb{T}$. \vspace{0.25cm}

Set $P =\left\lbrace P_{t}\right\rbrace_{t \in \mathbb{T}}$. The set $\tilde{\mathcal{Y}}$ contains one element of each patch. Let $\tilde{P}=\left\lbrace \tilde{P}_{t}\right\rbrace_{t \in \mathbb{T}}$ be the only stochastic demand system that is centered on $\tilde{\mathcal{Y}}$ and has the same vector representation as $P$. Since all nonstochastic demand systems of a given budget that are on the same patch lead to the same directly revealed preferences, nonstochastic demand systems are either all rationalizable or not. As a result, the demand systems within patches can be arbitrarily perturbed. It follows that $P$ is rationalizable if $\tilde{P}$ is. Therefore, the rationalizability of $P$ can be determined using its vector representation $\rho=\left\lbrace \rho_{t}\right\rbrace_{t \in \mathbb{T}}$. Thus, any stochastic demand system is rationalizable if and only if it is a mixture of rationalizable nonstochastic systems. Since I have a finite number of budgets; $\tilde{\mathcal{Y}}$ is finite, and there are many finite systems of nonstochastic demand that are based on it. Among these nonstochastic demand systems, a subset is rationalizable and characterized by binary vector representations corresponding to the columns of $\Gamma$.
\end{proof} 
\vspace{0.5cm}

\textbf{Proof of Proposition \ref{prop1}}

Figure \ref{fig0} shows the relationship between the RUM and the RPM.\vspace{0.25cm}
 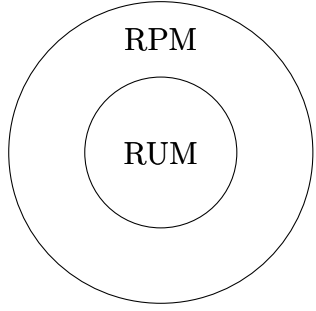
\begin{figure}
 \caption{Relationship between the RPM and the RUM when the dimension of the commodity space is larger than two}{\label{fig0}}
\begin{tikzpicture}
    \pgfmathsetmacro{\numCircles}{2}
    
    \foreach \i in {1,...,\numCircles} {
        \draw (0,0) circle (\i);
        \node at (0,0) {RUM}; 
         \node at (0,1.5) {RPM};
    }
\end{tikzpicture}
 \end{figure}
 
\begin{proof}
    The proof of this proposition is straightforward and follows from the fact that SARP comes from the WARP. If the data $S= \left\lbrace \rho_{t}, p_{t}\right\rbrace_{t \in \mathbb{T}}$ are rationalized by the RUM, then there is a probability space $(\Theta, \mathcal{F},\mu)$ and a random variable $y:\Theta \to \mathbb{R}^{LT}_{+}$ so that almost surely, $\left\lbrace p_{t}, y_{t}(\theta)\right\rbrace_{t \in \mathbb{T}}$ obeys SARP. But since SARP comes from the WARP, it follows that there is a probability space $(\Theta, \mathcal{F},\mu)$ and a random variable $y:\Theta \to \mathbb{R}^{LT}_{+}$ so that almost surely, $\left\lbrace p_{t}, y_{t}(\theta)\right\rbrace_{t \in \mathbb{T}}$ obeys WARP. Thus, the data set $S$ is rationalized by the RPM.
\end{proof}
\vspace{0.5cm}

\textbf{Proof of Proposition \ref{prop2}}

\begin{proposition}\cite[][]{cherchye2018transitivity}

   A set of normalized price vectors $P=\left\lbrace \tilde{p}_{t} \right\rbrace_{t \in \mathbb{T}}$
is WARP-reducible if and only if it is a \textit{triangular configuration}.
\end{proposition}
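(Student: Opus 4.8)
The plan is to treat this as the two-sided equivalence ``WARP-reducible'' $\Leftrightarrow$ ``triangular configuration,'' where, following \cite{cherchye2018transitivity}, I read \emph{WARP-reducible} as: every quantity dataset $\left\lbrace y_{t}\right\rbrace_{t\in\mathbb{T}}$ paired with the given prices satisfies the WARP if and only if it satisfies the SARP. Since SARP always implies WARP, the content is the forward implication WARP $\Rightarrow$ SARP, and I would normalize once at the outset by setting $\tilde{p}_{t}=p_{t}/(p_{t}y_{t})$ so that $\tilde{p}_{t}y_{t}=1$ and a direct revealed preference $y_{a}\succeq^{R,D}y_{b}$ becomes the affordability inequality $\tilde{p}_{a}y_{b}\le 1$ (strict iff $<1$). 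Because this statement is exactly the characterization established by \cite{cherchye2018transitivity}, the role here is to invoke it; what follows is the argument I would reconstruct before citing their full proof.

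For sufficiency (triangular $\Rightarrow$ WARP-reducible) I would argue by contradiction on a minimal revealed-preference cycle. Assume the data satisfy WARP but violate SARP, and take a cycle $y_{1}\succeq^{R,D}y_{2}\succeq^{R,D}\cdots\succeq^{R,D}y_{n}\succeq^{R,D}y_{1}$ of least length $n$ with at least one strict link. WARP rules out $n=2$, so $n\ge 3$. I would then apply the triangular condition to a triple of observations in the cycle: it yields a permutation $\sigma$ and $\lambda\in[0,1]$ with one price componentwise comparable to a convex combination of the other two, say $\tilde{p}_{\sigma(1)}\le\lambda\tilde{p}_{\sigma(2)}+(1-\lambda)\tilde{p}_{\sigma(3)}$. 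Multiplying through by a nonnegative bundle preserves the inequality, so $\tilde{p}_{\sigma(1)}y\le\lambda\,\tilde{p}_{\sigma(2)}y+(1-\lambda)\,\tilde{p}_{\sigma(3)}y$ for every $y\ge 0$. Feeding the affordability inequalities of the cycle into this relation produces a ``chord'': either a new direct revealed preference that short-circuits the cycle (contradicting minimality of $n$) or a pairwise reversal $\tilde{p}_{a}y_{b}\le 1$ and $\tilde{p}_{b}y_{a}\le 1$ (contradicting WARP). Either way no minimal cycle survives, so SARP holds.

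For necessity (WARP-reducible $\Rightarrow$ triangular) I would prove the contrapositive by explicit construction, mirroring the three-good, three-budget counterexample already exhibited in the paper (with the symmetric prices $p_{1},p_{2},p_{3}$ there, which I would first verify are \emph{not} a triangular configuration). If the configuration fails, there is a triple $\tilde{p}_{t},\tilde{p}_{s},\tilde{p}_{k}$ for which no permutation and no $\lambda\in[0,1]$ gives either inequality; by a separating-hyperplane/Farkas argument this geometric ``incomparability'' is precisely what lets me pick bundles $y_{t},y_{s},y_{k}$ so that each lies just inside the next budget, yielding the strict cyclic chain $y_{t}\succ^{R,D}y_{s}\succ^{R,D}y_{k}\succ^{R,D}y_{t}$ (an SARP violation) while keeping every pair non-reversible (so WARP holds). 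This exhibits a dataset that is WARP-consistent but not SARP-consistent, defeating WARP-reducibility.

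The main obstacle is the reduction step in sufficiency: orienting the triangular comparability correctly relative to the cycle so that combining it with the budget inequalities genuinely shortens the cycle rather than producing a vacuous bound. The permutation freedom in the definition means several cases must be handled, and one must check that strictness is carried along at least one link so that the shortened object is again a genuine SARP-violating cycle. On the necessity side, the analogous difficulty is converting the nonexistence of an admissible pair $(\lambda,\sigma)$ into concrete quantities that realize the cycle, which is where the separating-hyperplane argument does the work. Since the equivalence is due to \cite{cherchye2018transitivity}, I would ultimately defer the full case analysis to their proof and use the statement directly in the proof of Proposition \ref{prop2}.
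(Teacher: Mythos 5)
The paper offers no proof of this proposition at all — it is stated as an imported result of \cite{cherchye2018transitivity} inside the proof of Proposition \ref{prop2} — so your ultimate move of invoking the citation is exactly the paper's approach, and your reconstruction (normalizing so that $\tilde{p}_{t}y_{t}=1$, shortening a minimal SARP-violating cycle via the componentwise convex-combination comparability for sufficiency, and building a strict three-cycle from a failing triple for necessity) faithfully mirrors the strategy of the cited source. Your sketch is sound as far as it goes — including the check that the paper's symmetric three-price example fails the triangular condition — and you correctly flag that the permutation/orientation case analysis is the real work, appropriately deferred to \cite{cherchye2018transitivity}.
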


\begin{proof}
    From Proposition \ref{prop1}, I know that if $S$ is rationalized by the RUM, it is also rationalized by the RPM. To establish the equivalence, I need to show that if $S$ is rationalized by the RPM, then it is rationalized by the RUM. If $S$ is rationalized by the RPM, then there is a probability space $(\Theta, \mathcal{F},\mu)$ and a random variable $y:\Theta \to \mathbb{R}^{LT}_{+}$ so that almost surely, $\left\lbrace p_{t}, y_{t}(\theta)\right\rbrace_{t \in \mathbb{T}}$ obeys the WARP. But since WARP is equivalent to the SARP under \textit{triangular configuration}, it follows that there is a probability space $(\Theta, \mathcal{F},\mu)$ and a random variable $y:\Theta \to \mathbb{R}^{LT}_{+}$ so that almost surely, $\left\lbrace p_{t}, y_{t}(\theta)\right\rbrace_{t \in \mathbb{T}}$ obeys the SARP.
\end{proof}
\vspace{0.5cm}

\textbf{Proof of Corollary \ref{cor1}}
\begin{proof}
Assume that the data are rationalized by the RPM. I show that there is a unique probability vector $u_{ts} \in \Delta^{3}$ such that $\Gamma^{*} u_{ts}=\rho_{ts}$ for all $t, s \in \mathbb{T} $, $\rho_{ts}=\left(\rho_{0 \mid t}, \rho_{1 \mid t}, \rho_{0 \mid s}, \rho_{1 \mid s}\right)^{'}$.\vspace{0.25cm}

\textbf{Case 1}: The budgets $B_{s}$ and $B_{t}$ overlap each other.
If $\Gamma \eta=\rho$, then for all $t \in \mathbb{T}$ and for all $i \in \mathcal{I}_{t}$,
 $\rho_{i \mid t}=\sum_{a \in \mathcal{A}}\eta_{a} a_{i \mid t},$ where $\eta$ is a column vector $\left(\eta_{a}\right)_{a \in \mathcal{A}}$ with $\eta_{a} \geq 0$ for all $a \in \mathcal{A}$. Let $a_{t}=\left\lbrace a_{i \mid t}\right\rbrace_{i \in \mathcal{I}_{t}}$ and $\mathcal{A}= \left\lbrace \mathcal{A}_{i \mid t}\right\rbrace_{l \in \mathcal{I}_{t}, t\in \mathbb{T}}$ with
$\mathcal{A}_{i \mid t}=\left\lbrace a: a_{i \mid t}=1 \right\rbrace $ for all $t \in \mathbb{T}$ that are the subset of WARP-consistent types that have their demands in $x_{i \mid t}$. \vspace{0.25cm}

I define a function that maps $\rho$ to the vector of choice probabilities $\rho_{ts}$ as follows: $A_{t \mid s}\rho=\rho_{ts}$, $A_{t \mid s}$ is a matrix with four rows, and its number of columns corresponds to the total number of patches over the $T$ periods. $A_{t \mid s}$ is composed of $0$ and $1$ and is not necessarily full rank. \vspace{0.25cm}

For all $t,s \in \mathbb{T}$, the condition $\Gamma^{*} u_{ts}=\rho_{ts}$ is therefore equivalent to $\Gamma^{*} u_{t \mid s}= A_{t \mid s} \rho= A_{t \mid s} \left(\Gamma \eta\right)$
 The matrix $\Gamma^{*}$ is full rank, so it is invertible. It follows that for all $t, s \in \mathbb{T}$, $$u_{ts}=\Gamma^{*^{-1}}A_{t \mid s} \left(\Gamma \eta\right)$$
 The vector $u$ is unique because the inverse of the matrix $\Gamma^{*}$ exists and is positive by construction.
 
\textbf{Case 2}: The hyperplane budgets $B_{t}$ and $B_{s}$ do not intersect. I assume w.l.o.g that $B_{s}$ lies above $B_{t}$. The rational matrix is therefore $\Gamma^{*}=(1 ~1)^{'}$. Thus, $\Gamma^{*} u=\rho_{ts}$, where $u=(u_{t}, u_{s})^{'}$ and $\rho_{ts}=(\rho_{t}, \rho_{s})^{'}$. Therefore, $$u_{t}=\rho_{t}=A_{t}\rho=A_{t}\left(\Gamma \eta\right)~\text{and}~u_{s}=\rho_{s}=A_{s}\rho=A_{s}\left(\Gamma \eta\right) $$
where $A_{t}$ and $A_{s}$ are the functions that map the vector of choice probabilities $\rho$ to the vector of choice probabilities $\rho_{t}$ and $\rho_{s}$, respectively.
\end{proof}

\vspace{0.5cm}

\textbf{Econometric Testing} 

\underline{Simulating a Critical Value for $\mathcal{J}_{N}$}:
This approach deals with sampling variability. Sampling variability results from the fact that $\rho$ can only be consistently estimated by the realized choice frequencies $\hat{\rho}$.
Let $\hat{\rho}^{*(r)}$, with $r=1, \ldots R$ be the bootstrap replications of $\hat{\rho}$. Let $\iota$ be a vector of ones of dimension $H$ and $\tau_{N}=\sqrt{\frac{\log N}{N}}$ be a tuning parameter.\vspace{0.25cm}

\begin{itemize}
\item[(i)] Obtain the $\tau_{N}$-tightened restricted estimator $\hat{\eta}_{\tau_{N}}$ that solves 
$$min_{\eta \in \mathcal{C}_{\tau_{N}}} \parallel \hat{\rho}-\eta \parallel^{2}=min_{\left[\nu-\tau_{N}\iota/ H \right] \in \mathbb{R}_{+}^{H}} \parallel \hat{\rho}-\Gamma\nu \parallel^{2};$$ 
\item[(iii)] The bootstrap test statistic is: 
$$\mathcal{J}_{N}^{*(r)}=\min_{\left[\nu-\tau_{N}\iota/H\right] \in \mathbb{R}_{+}^{H}} \parallel \hat{\rho}^{*(r)}-\hat{\rho}+\hat{\eta}_{\tau_{N}}-\eta \parallel^{2},~~r=1, \ldots, R;$$ 
\item[(vi)]. Use the empirical distribution of $\mathcal{J}_{N}^{*(r)}$, $r=1, \ldots, R$, to obtain the critical value for $\mathcal{J}_{N}$.\footnote{Please, see KS to learn more about this test statistic.} \label{stat}
\end{itemize}

\vspace{0.5cm}

\textbf{Proof of Theorem \ref{theor3}}
\begin{proof}
The proof of this Theorem is omitted as it is similar to that of Theorem $3$ in \cite{kitamura2019nonparametric}.
By the law of iterated expectation,
\begin{align*}
\mathbb{E}\left(h(y(\theta)\right) &=\sum_{i=0}^{I_{0}} \rho_{i\mid 0} h_{i \mid 0}   \\
  &=\left( h_{0 \mid 0},\ldots, h_{I_{0} \mid 0}\right) \rho_{0}, 
\end{align*}
where $\mathbb{E}\left(h(y((\theta) \mid y \in x_{i \mid 0} \right)=h_{i \mid 0}$ if $\rho_{i \mid 0} \neq 0$, and I assign it any other value otherwise.
If $i=0$, then $\underline{h}_{0 \mid 0}=\inf\left\lbrace h(y(\theta)), y(\theta)\in x_{0 \mid 0}\right\rbrace$, where $y(\theta)$ is an element of $x_{0 \mid 0}$ for which the function $h$ has the smallest value. Similarly, for $l=0$: $\bar{h}_{0 \mid 0}=\sup\left\lbrace h(y(\theta)), y(\theta) \in x_{0 \mid 0}\right\rbrace$, where $\tilde{y}$ is an element of $x_{0 \mid 0}$ for which the function $h$ has the highest value. For all $i \in \mathcal{I}_{0}$, $\underline{h}_{i \mid 0}$ and $\bar{h}_{i \mid 0}$ are tractable if the function $h$ is continuous. If the function $h$ is, in addition, linear, then computing the bounds requires only linear programming \cite[][]{kitamura2019nonparametric}.
I can also bound probabilities of arbitrary events as follows:
\begin{corollary}{\label{cor3}}
Given a fixed event $x_{i \mid 0} \subseteq B_{0}^{*}=\left\lbrace x_{i \mid 0}\right\rbrace_{i \in \mathcal{I}_{0}}$, the bounds are:
 \begin{align*}
 \min \left\lbrace \Sigma_{i=0}^{I_{0}}  ~e^{'}_{i}\Gamma_{0} \nu:
 \tilde{\Gamma}\nu=\tilde{\rho},~ \nu \geq 0 \right\rbrace  \\
  \leq \mu\left(y(\theta) \in x_{i \mid 0})\right) \leq \\
   \max \left\lbrace \Sigma_{i=0}^{I_{0}}  ~e^{'}_{i}\Gamma_{0} \nu:
\tilde{\Gamma}\nu=\tilde{\rho},~ \nu \geq 0  \right\rbrace 
\end{align*}
 
 where $\tilde{\rho}=\left[\rho_{1} \ldots \rho_{T}\right]^{'}$,  
  $\tilde{\Gamma}=\left[\Gamma_{1} \ldots \Gamma_{T}\right]^{'}$, and $e^{'}_{i}$ are the $i^{th}$ canonical basis vector in $\mathbb{R}^{I_{0}}$, and $\mu$ is the true distribution rationalizing the data. 
\end{corollary}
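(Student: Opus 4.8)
The plan is to obtain Corollary \ref{cor3} as an immediate specialization of Theorem \ref{theor3}, taking the known function $h$ to be the indicator of the event of interest. The key observation is that a probability is the expectation of an indicator: $\mu\left(y(\theta)\in x_{i\mid 0}\right)=\mathbb{E}\left(\mathbf{1}\{y(\theta)\in x_{i\mid 0}\}\right)$. Thus the quantity to be bounded is already of the form $\mathbb{E}(h(y(\theta)))$ treated in Theorem \ref{theor3}, and the entire task reduces to evaluating the bounding coefficients $\underline{h}_{j\mid 0}$ and $\bar{h}_{j\mid 0}$ for this particular choice of $h$.

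First I would fix the event $x_{i\mid 0}$ and set $h(\cdot)=\mathbf{1}\{\cdot\in x_{i\mid 0}\}$, which takes values in $\{0,1\}$ and is therefore bounded on $B_0^{*}$ as required. Because $\{x_{j\mid 0}\}_{j\in\mathcal{I}_0}$ is a partition of the counterfactual budget, the patches are pairwise disjoint, so $h$ is constant on each patch: identically $1$ on $x_{i\mid 0}$ and identically $0$ on every $x_{j\mid 0}$ with $j\neq i$. Consequently $\underline{h}_{j\mid 0}=\bar{h}_{j\mid 0}=\mathbf{1}\{j=i\}$ for all $j$, so both coefficient row vectors $\left(\underline{h}_{0\mid 0},\ldots,\underline{h}_{I_0\mid 0}\right)$ and $\left(\bar{h}_{0\mid 0},\ldots,\bar{h}_{I_0\mid 0}\right)$ equal the $i$-th canonical basis vector $e_i'$. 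This is the step that collapses the interval-valued integrand of Theorem \ref{theor3} to a single linear functional, and it is the only place where a genuine verification is needed.

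Next I would substitute these coefficients into the two linear programs of Theorem \ref{theor3}. Since the lower and upper coefficient vectors now coincide with $e_i'$, the two programs share the objective $e_i'\Gamma_0\nu$ and the feasible set $\{\nu:\ \tilde{\Gamma}\nu=\tilde{\rho},\ \nu\geq 0\}$, where $\tilde{\rho}=[\rho_1\,\ldots\,\rho_T]'$ and $\tilde{\Gamma}=[\Gamma_1\,\ldots\,\Gamma_T]'$ encode joint consistency of $\nu$ with the observed choices. This yields the asserted sandwich with $\min$ on the left and $\max$ on the right. (The $\sum_{i=0}^{I_0}$ appearing in the displayed bounds is simply the expansion $\mathbb{E}(h)=\sum_{j\in\mathcal{I}_0}\rho_0(x_{j\mid 0})\,h_{j\mid 0}$ used in the proof of Theorem \ref{theor3} with $\rho_0(x_{j\mid 0})=e_j'\Gamma_0\nu$; under the indicator choice of $h$ only the $j=i$ term survives, leaving the single functional $e_i'\Gamma_0\nu$.)

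Finally, sharpness is inherited directly from Theorem \ref{theor3}: by the RPM characterization in Theorem \ref{theo1}, every $\nu\geq 0$ with $\tilde{\Gamma}\nu=\tilde{\rho}$ corresponds to an actual rationalization of the observed data, and each such $\nu$ induces the counterfactual patch probability $e_i'\Gamma_0\nu$. Hence every value between the two optima is attained by some rationalization, and the bounds cannot be narrowed without additional information. The nearest thing to an obstacle is thus purely bookkeeping — confirming that $h$ is constant on each patch so that $\underline{h}_{j\mid 0}=\bar{h}_{j\mid 0}$, which rests solely on $\{x_{j\mid 0}\}_{j\in\mathcal{I}_0}$ forming a genuine partition of $B_0^{*}$. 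Once that is in hand, the result is a one-line corollary of the theorem.
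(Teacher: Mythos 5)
Your proposal is correct and matches the paper's intent exactly: the paper presents Corollary \ref{cor3} as an immediate consequence of Theorem \ref{theor3} without further argument, and the indicator-function specialization $h(\cdot)=\mathbf{1}\{\cdot\in x_{i\mid 0}\}$, with $\underline{h}_{j\mid 0}=\bar{h}_{j\mid 0}=\mathbf{1}\{j=i\}$ because the patches partition $B_{0}^{*}$, is precisely the intended route. You also correctly diagnose that the $\Sigma_{i=0}^{I_{0}}$ in the displayed bounds is the expansion of $\mathbb{E}(h)$ that collapses to the single functional $e_{i}'\Gamma_{0}\nu$ for the fixed event.
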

This corollary provides the bounds for the probability that a given nonstochastic demand $y(\theta)$ falls within the patch $x_{i \mid 0}$. In the current literature, the lower and upper bounds cannot be determined if the data $S$ is not RUM-rationalized because the interval in which $\mu\left(y(\theta) \in x_{i \mid 0})\right)$ belongs is empty. Theorem \ref{theor3} shows that if $S$ is RPM-rationalized, one can recover these bounds and obtain counterfactual bounds that are robust against transitivity violations.
\end{proof} 

\vspace{0.5cm}

\textbf{Counter-Example, Remark 2}\label{matlab}

I ran the following minimization problem in Matlab:
$$\min \nu, \text{subject to}~~\Gamma \nu=\rho~ and~ \nu \geq 0$$
and I received the following message: Certificate\_infeasibility,  ``No feasible solution found" and ``Linprog stopped because no point satisfies the constraints." \vspace{0.25cm}
 A certificate of infeasibility belongs to the following set:
 $$S=\left\lbrace y \in \mathbb{R}^{12}: y^{'}\Gamma=0 ~\text{and}~y^{'} \rho >0 \right\rbrace $$
 
 \vspace{0.5cm}
 
\textbf{Example 4:} 
 \textbf{Generalized Shafer (1974)'s Preference Function}
 
 Given that $\alpha \in (0, 1)$,  $r_{\alpha}(x, y)=(x^{2})^{\alpha} (y^{1})^{\alpha-1}+ \ln (x^{3})-(y^{2})^{\alpha}(x^{1})^{\alpha-1}-\ln (y_{3})$ 
 
 is the general version of the \cite{shafer1974nontransitive}'s preference function.
\begin{itemize}
    \item [(i)] $r_{\alpha}$ is continuous on its domain $X \times X$.
    \item [(ii)] $r_{\alpha}$ is skew-symmetric. 
    \begin{align*}
r_{\alpha}(x, y) &=(x^{2})^{\alpha} (y^{1})^{\alpha-1}+ \ln (x^{3})-(y^{2})^{\alpha}(x^{1})^{\alpha-1}-\ln (y^{3}) \\
 &= -\left((y^{2})^{\alpha}(x^{1})^{\alpha-1}+\ln (y^{3})-(x^{2})^{\alpha}(y^{1})^{\alpha-1}-\ln (x^{3})\right) \\
 &=-r_{\alpha}(y, x)
    \end{align*}

    \item [(iii)] $r_{\alpha}(x, y)$ is strictly increasing. Given $\alpha \in (0, 1)$ and $x, y, z \in X$ such that $x > z$
   
    $r_{\alpha}(x, y)-r_{\alpha}(z, y)=(y^{1})^{\alpha-1}\left((x^{2})^{\alpha}-(z^{2})^{\alpha}\right)+(y^{2})^{\alpha}\left((z^{1})^{\alpha-1}-(x^{1})^{\alpha-1}\right)+\ln(x^{3})-\ln(z^{3})$
Since $x > z$,  $\left(x^{i}-z^{i}\right) > 0$ for all $i=1, 2, 3$ and $x \neq z$. Hence, $x^{3} > z^{3}$ and $\ln(x^{3}) > \ln(z^{3})$ because the natural logarithm function strictly increases with $\mathbb{R}_{+}\setminus\left\lbrace 0\right\rbrace$.  $\left(x^{i}-z^{i}\right) > 0$ implies that $\left((x^{i})^{\alpha}-(z^{i})^{\alpha} \right) > 0$ and $\left((z^{1})^{\alpha-1}-(x^{1})^{\alpha-1}\right) > 0$. It follows that $r_{\alpha}(x, y) > r_{\alpha}(z, y)$.
\item [(iv)] The preference function $r_{\alpha}$ is concave for $\alpha \in \left(0, 1\right)$. Given $y \in X$

$\frac{\partial r_{\alpha}}{\partial x^{1}}=-(\alpha-1)(y^{2})^{\alpha}(x^{1})^{\alpha-2}>0$ ~and~ $\frac{\partial^{2} r_{\alpha}}{\partial (x^{1})^{2}}=-(\alpha-1)(\alpha-2) (y^{2})^{\alpha} (x^{1})^{\alpha-3} <0 $\\ 
\vspace{0.25cm}
$\frac{\partial r_{\alpha}}{\partial x^{2}}=\alpha (x_{2})^{\alpha-1}(y_{1})^{\alpha-1}>0$ ~and~ $\frac{\partial^{2} r_{\alpha}}{\partial (x^{2})^{2}}=\alpha (\alpha-1)(x^{2})^{\alpha-2}(y^{1})^{\alpha-1} <0 $\\ 
\vspace{0.25cm}
$\frac{\partial r_{\alpha}}{\partial x^{3}}=1/x^{3}>0$ ~and~ $\frac{\partial^{2} r_{\alpha}}{\partial (x^{3})^{2}}=-1/(x^{3})^{2}<0$ ~and~ $\frac{\partial^{2} r_{\alpha}}{\partial x^{1}x^{2}}=\frac{\partial^{2} r_{\alpha}}{\partial x^{1}x^{3}}=\frac{\partial^{2} r_{\alpha}}{\partial x^{2}x^{3}}=0$ 
\end{itemize}
\vspace{0.5cm}

\textbf{Monte Carlo Simulations}
 
$$X=
\left( {\begin{array}{ccc}
0 & -1& -1\\
0 & 1& -1\\
0 & -1& 1\\
0 & 1& 1\\
-1 & 0& -1\\
1& 0& -1\\
-1 & 0& 1\\
1 & 0& 1\\
-1 & -1& 0\\
1 & -1& 0\\
-1 & 1& 0\\
1 & 1& 0\\
\end{array} } \right)$$~~~~\begin{table}\label{table3}
  \centering
  \begin{tabular}{cccc}
    \hline \\
  
   &$\rho_{outside}$ & $\rho_{inside}$ & $\rho_{true}$  \\
     \hline
       \hline \\
    &$0.3$ & $0.181$ & $0.1869$  \\
    &$0.2$ & $0.2262$ & $0.2249$  \\
     &$0.2$ & $0.2262$ & $0.2249$  \\
      &$0.3$ & $0.3667$ & $0.3633$  \\
        &$0.3$ & $0.181$ & $0.1869$  \\
       &$0.2$ & $0.2262$ & $0.2249$  \\
        &$0.2$ & $0.2262$ & $0.2249$  \\
        &$0.3$ & $0.3667$ & $0.3633$  \\
         &$0.3$ & $0.181$ & $0.1869$  \\
          &$0.2$ & $0.2262$ & $0.2249$  \\
          &$0.2$ & $0.2262$ & $0.2249$  \\
           &$0.3$ & $0.3667$ & $0.3633$  \\
    \hline
     \hline
  \end{tabular}
  \caption{Vectors used for the Monte Carlo Simulations}
  \label{tab:mytable4}
\end{table}

\vspace{0.75cm}

$$\Gamma=
\left( {\begin{array}{cccccccccccccccccccccccccccc}{\label{matrix}}
1 & 1& 1&1&1&1&1&1&1&1&1&1&0&0&0&0&0&0&0&0&0&0&0&0&0&0&0\\
0 & 0& 0&0&0&0&0&0&0&0&0&0&1&1&1&1&1&1&0&0&0&0&0&0&0&0&0\\
0 & 0& 0&0&0&0&0&0&0&0&0&0&0&0&0&0&0&0&1&1&1&1&1&1&0&0&0\\
0 & 0& 0&0&0&0&0&0&0&0&0&0&0&0&0&0&0&0&0&0&0&0&0&0&1&1&1\\
1 & 1& 1&1&0&0&0&0&0&0&0&0&1&1&1&1&0&0&1&1&0&0&0&0&1&1&0\\
0 & 0& 0&0&1&1&1&1&0&0&0&0&0&0&0&0&0&0&0&0&1&1&0&0&0&0&0\\
0 & 0& 0&0&0&0&0&0&1&1&0&0&0&0&0&0&1&1&0&0&0&0&1&0&0&0&1\\
0 & 0& 0&0&0&0&0&0&0&0&1&1&0&0&0&0&0&0&0&0&0&0&0&1&0&0&0\\
1 & 0& 0&0&1&0&0&0&1&0&1&0&1&0&0&0&1&0&1&0&1&0&1&1&1&0&1\\
0 & 1& 0&0&0&1&0&0&0&0&0&0&0&1&0&0&0&0&0&1&0&0&1&0&0&1&0\\
0 & 0& 1&0&0&0&1&0&0&1&0&1&0&0&1&0&0&1&0&0&0&0&0&0&0&0&0\\
0 & 0& 0&1&0&0&0&1&0&0&0&0&0&0&0&1&0&0&0&0&0&0&0&0&0&0&0
\end{array} } \right)$$
\[ \Gamma \nu=\rho \Leftrightarrow \left\{
\begin{array}{ll}
\Sigma_{l=1}^{12} \nu =\rho_{0 \mid 1}\\
\nu_{13}+\nu_{14}+\nu_{15}+\nu_{16}+\nu_{17}+\nu_{18} =\rho_{1 \mid 1}\\
\nu_{19}+\nu_{20}+\nu_{21}+\nu_{22}+\nu_{23}+\nu_{24}=\rho_{2 \mid 1} \\
\nu_{25}+\nu_{26}+\nu_{27}=\rho_{3 \mid 1} \\
\nu_{1}+\nu_{2}+\nu_{3}+\nu_{4}+\nu_{13}+\nu_{14}+\nu_{15}+\nu_{16}+\nu_{19}+\nu_{20}+\nu_{25}+\nu_{26}=\rho_{0 \mid 2} \\
\nu_{5}+\nu_{6}+\nu_{7}+\nu_{8}+\nu_{21}+\nu_{22}=\rho_{1 \mid 2}\\
\nu_{9}+\nu_{10}+\nu_{17}+\nu_{18}+\nu_{23}+\nu_{27}=\rho_{2\mid 2}\\
\nu_{11}+\nu_{12}+\nu_{24}=\rho_{3 \mid 2} \\
\nu_{1}+\nu_{5}+\nu_{9}+\nu_{11}+\nu_{13}+\nu_{17}+\nu_{19}+\nu_{21}+\nu_{23}+\nu_{24}+\nu_{25}+\nu_{27}=\rho_{0 \mid 3} \\
\nu_{2}+\nu_{6}+\nu_{14}+\nu_{20}+\nu_{23}+\nu_{26}=\rho_{1 \mid 3} \\
\nu_{3}+\nu_{7}+\nu_{10}+\nu_{12}+\nu_{15}+\nu_{18}=\rho_{2 \mid 3} \\
\nu_{4}+\nu_{8}+\nu_{16}=\rho_{3 \mid 3}
\end{array}
\right. \]
\vspace{1cm}
\begin{figure}
 \caption{Consumers' types}{\label{figur3}}
 \begin{subfigure}{5cm}
 \caption{Rational type $1$}
\begin{tikzpicture}[scale=0.6]Axis
\draw [->] (0,0) node [below] {0} -- (0,0) -- (6,0) node [right] {$y_{1}$};
\draw [->] (0,0) node [below] {0} -- (0,0) -- (0,6) node [above] {$y_{2}$};
\node [below][scale=0.7] at (5,0) {$B_{1}$};
\node [left][scale=0.7] at (0, 2.5) {};
\node [below][scale=0.7] at (2.5,0) {$B_{2}$};
\node [left][scale=0.7] at (0, 5) {};
\draw [ultra thick] (0, 2.5)--(5,0) node[above][scale=0.7]{};
\draw [ultra thick] (0,5)--(2.5, 0) node[above left][scale=0.7]{};
\draw[red, thick] (0, 5) -- (1.6, 1.8);
\draw[red, thick] (0, 2.5) -- (1.6, 1.8);
\node [below][scale=0.7] at (1.2, 4) {$x_{0 \mid 2}$};
\node [below][scale=0.7] at (1.7, 0.9) {$x_{1 \mid 2}$};
\node [below][scale=0.7] at (0.8, 2) {$x_{0 \mid 1}$};
\node [below][scale=0.7] at (4.2, 1.2) {$x_{1 \mid 1}$};
\end{tikzpicture}
\end{subfigure}
  \begin{subfigure}{5cm}
  \caption{Rational type $2$}
\begin{tikzpicture}[scale=0.6]Axis
\draw [->] (0,0) node [below] {0} -- (0,0) -- (6,0) node [right] {$y_{1}$};
\draw [->] (0,0) node [below] {0} -- (0,0) -- (0,6) node [above] {$y_{2}$};
\node [below][scale=0.7] at (5,0) {$B_{1}$};
\node [left][scale=0.7] at (0, 2.5) {};
\node [below][scale=0.7] at (2.5,0) {$B_{2}$};
\node [left][scale=0.7] at (0, 5) {};
\draw [ultra thick] (0, 2.5)--(5,0) node[above][scale=0.7]{};
\draw [ultra thick] (0,5)--(2.5, 0) node[above left][scale=0.7]{};
\draw[red, thick] (1.6, 1.8) -- (2.5, 0);
\draw[red, thick] (1.6, 1.8) -- (5, 0);
\node [below][scale=0.7] at (1.2, 4) {$x_{0 \mid 2}$};
\node [below][scale=0.7] at (1.6, 1) {$x_{1 \mid 2}$};
\node [below][scale=0.7] at (0.8, 2) {$x_{0 \mid 1}$};
\node [below][scale=0.7] at (4, 1.4) {$x_{1 \mid 1}$};
\end{tikzpicture}
  \end{subfigure}
  \begin{subfigure}{5cm}
  \caption{Rational type $3$}
\begin{tikzpicture}[scale=0.6]Axis
\draw [->] (0,0) node [below] {0} -- (0,0) -- (6,0) node [right] {$y_{1}$};
\draw [->] (0,0) node [below] {0} -- (0,0) -- (0,6) node [above] {$y_{2}$};
\node [below][scale=0.7] at (5,0) {$B_{1}$};
\node [left][scale=0.7] at (0, 2.5) {};
\node [below][scale=0.7] at (2.5,0) {$B_{2}$};
\node [left][scale=0.7] at (0, 5) {};
\draw [ultra thick] (0, 2.5)--(5,0) node[above][scale=0.7]{};
\draw [ultra thick] (0,5)--(2.5, 0) node[above left][scale=0.7]{};
\draw[red, thick] (0, 5) -- (1.6, 1.8);
\draw[red, thick] (1.6, 1.8) -- (5, 0);
\node [below][scale=0.7] at (1.2, 4) {$x_{0 \mid 2}$};
\node [below][scale=0.7] at (1.6, 1) {$x_{1 \mid 2}$};
\node [below][scale=0.7] at (0.8, 2) {$x_{0 \mid 1}$};
\node [below][scale=0.7] at (4, 1.4) {$x_{1 \mid 1}$};
\end{tikzpicture}
  \end{subfigure}
  
  \begin{subfigure}{5cm}
  \caption{Irrational type}
\begin{tikzpicture}[scale=0.6]Axis
\draw [->] (0,0) node [below] {0} -- (0,0) -- (6,0) node [right] {$y_{1}$};
\draw [->] (0,0) node [below] {0} -- (0,0) -- (0,6) node [above] {$y_{2}$};
\node [below][scale=0.7] at (5,0) {$B_{1}$};
\node [left][scale=0.7] at (0, 2.5) {};
\node [below][scale=0.7] at (2.5,0) {$B_{2}$};
\node [left][scale=0.7] at (0, 5) {};
\draw [ultra thick] (0, 2.5)--(5,0) node[above][scale=0.7]{};
\draw [ultra thick] (0,5)--(2.5, 0) node[above left][scale=0.7]{};
\draw[red, thick] (1.6, 1.8) -- (2.5, 0);
\draw[red, thick] (0, 2.5) -- (1.6, 1.8);
\node [below][scale=0.7] at (1.2, 4) {$x_{0 \mid 2}$};
\node [below][scale=0.7] at (1.6, 1) {$x_{1 \mid 2}$};
\node [below][scale=0.7] at (0.8, 2) {$x_{0 \mid 1}$};
\node [below][scale=0.7] at (4, 1.4) {$x_{1 \mid 1}$};
\end{tikzpicture}
  \end{subfigure}
\end{figure}
  
\newpage
\bibliographystyle{apacite}
\bibliography{summerpaper}

\end{document}